\documentclass[preprint,journal]{vgtc}       





\ifpdf
  \pdfoutput=1\relax                   
  \pdfcompresslevel=9                  
  \pdfoptionpdfminorversion=7          
  \ExecuteOptions{pdftex}
  \usepackage{graphicx}                
  \DeclareGraphicsExtensions{.pdf,.png,.jpg,.jpeg} 
\else
  \ExecuteOptions{dvips}
  \usepackage{graphicx}                
  \DeclareGraphicsExtensions{.eps}     
\fi%

\graphicspath{{figures/}{pictures/}{images/}{./}} 

\usepackage{microtype}                 
\PassOptionsToPackage{warn}{textcomp}  
\usepackage{textcomp}                  
\usepackage{mathptmx}                  
\usepackage{times}                     
\usepackage{cite}                      
\usepackage{tabu}                      
\usepackage{booktabs}                  



\NeedsTeXFormat{LaTeX2e}
\usepackage{wrapfig}
\usepackage{todonotes}
\usepackage{mathptmx}
\usepackage{graphicx}
\usepackage{times}
\usepackage{amsmath,amscd,amssymb, amsthm}
\usepackage{times}
\usepackage[abs]{overpic}
\usepackage{cases}
\usepackage{float}
\usepackage{caption}
\usepackage{subfloat}
\usepackage{subfig}
\usepackage{wrapfig}
\usepackage{placeins}

\onlineid{1066}

\vgtccategory{Research}
\vgtcpapertype{algorithm/technique}

\title{Mode Surfaces of Symmetric Tensor Fields:\\Topological Analysis and Seamless Extraction}


\author{Botong Qu, Lawrence Roy, Yue Zhang,\textit{Member, IEEE}, and Eugene Zhang,\textit{Senior Member, IEEE}}
\authorfooter{
\item
B. Qu is with the School of Electrical Engineering and Computer Science, Oregon State University. E-mail: qub@oregonstate.edu.
\item
L. Roy is with the School of Electrical Engineering and Computer Science, Oregon State University. E-mail: royl@eecs.oregonstate.edu.
\item
Y. Zhang is an Associate Professor with the School of Electrical Engineering and Computer Science, Oregon State University. E-mail: zhangyue@oregonstate.edu.
\item
E. Zhang is a Professor with the School of Electrical Engineering and Computer Science, Oregon State University. E-mail: zhange@eecs.oregonstate.edu.
}

\shortauthortitle{Qu \MakeLowercase{\textit{et al.}}: Mode Surfaces of Symmetric Tensor Fields: Topological Analysis and Seamless Extraction}

\abstract{Mode surfaces are the generalization of degenerate curves and neutral surfaces, which constitute 3D symmetric tensor field topology. Efficient analysis and visualization of mode surfaces can provide additional insight into not only degenerate curves and neutral surfaces, but also how these features transition into each other. Moreover, the geometry and topology of mode surfaces can help domain scientists better understand the tensor fields in their applications. Existing mode surface extraction methods can miss features in the surfaces. Moreover, the mode surfaces extracted from neighboring cells have gaps, which make their subsequent analysis difficult. In this paper, we provide novel analysis on the topological structures of mode surfaces, including a common parameterization of all mode surfaces of a tensor field using 2D asymmetric tensors. This allows us to not only better understand the structures in mode surfaces and their interactions with degenerate curves and neutral surfaces, but also develop an efficient algorithm to seamlessly extract mode surfaces, including neutral surfaces. The seamless mode surfaces enable efficient analysis of their geometric structures, such as the principal curvature directions. We apply our analysis and visualization to a number of solid mechanics data sets.
} 

\keywords{Tensor field visualization, tensor field topology, traceless tensors, degenerate curve extraction, neutral surface extraction, mode surface extraction.}


\CCScatlist{ 
 \CCScat{K.6.1}{Management of Computing and Information Systems}%
{Project and People Management}{Life Cycle};
 \CCScat{K.7.m}{The Computing Profession}{Miscellaneous}{Ethics}
}

\teaser{
\centering
\subfloat[][$\mu = -0.9998$ and $\mu = \pm 1.0$]{\includegraphics[width=0.3\linewidth]{}} \hspace{0.05in}
\subfloat[][$\mu = -0.80$]{\includegraphics[width=0.3\linewidth]{}}\hspace{0.05in}
\subfloat[][$\mu = 0.0$ and $\mu = -0.45$]{\includegraphics[width=0.3\linewidth]{}}
\caption{Mode surfaces of a stress tensor field for a block under compression. When the mode value $\mu$ is close to $\pm 1$ (a), the mode surface resembles a vascular structure that contains a core, the degenerate curves (yellow). On the other hand, when $\mu$ is close $0$ (c), the mode surface converges onto the neutral surfaces (chartreuse). By observing the change in mode surfaces' geometry and topology ((a)-(c)), we gain insight into the interaction between degenerate curves and neutral surfaces, the two constituents of tensor field topology. In addition, some mode surfaces can contain interesting features not present in degenerate curves and neutral surfaces, such as the bottom layer with a hole (b). Finally, the change in mode surfaces' topology when mode values change, such as the contraction of the mode surface from (b) to a vascular structure in (a) can also provide important insight into the underlying physics.  }
\label{fig:teaser}
}



\vgtcinsertpkg


\long\def\symbolfootnote[#1]#2{\begingroup%
\def\thefootnote{\fnsymbol{footnote}}\footnote[#1]{#2}\endgroup}

%
%



\newtheorem{theorem}{\sffamily Theorem}
\newtheorem{lemma}[theorem]{\sffamily Lemma}
\newtheorem{corollary}[theorem]{\sffamily Corollary}

\DeclareMathOperator{\trace}{trace}
\DeclareMathOperator{\adj}{adj}


\hyphenation{op-tical net-works semi-conduc-tor}

\setlength{\parskip}{1em}
\setlength{\parindent}{0em}


\begin{document}


\firstsection{Introduction}
\label{sec:introduction}

\maketitle

Symmetric tensor fields have a wide range of applications in science, engineering, and medical domains.  The diffusion tensor field analysis in medical imaging plays a key role in diagnosing and treatment planning for brain cancers.  The stress and strain tensors in continuum mechanics  enable the predictions of structural failures.  Topology-driven analysis and visualization of 3D symmetric tensor fields have made much progress in recent years, of which the focus is on high-quality extraction and visualization of the tensor field topology such as {\em degenerate curves} (where the tensors have repeating eigenvalues) and {\em neutral surfaces} (where the major, medium, and minor eigenvalues of the tensors form an arithmetic sequence). On the other hand, degenerate curves and neutral surfaces are often treated as unrelated objects and interpreted separately.

In fact, both are a level set of the {\em mode} function of the tensor field~\cite{CRISCIONE:00}, which ranges from $-1$ to $1$, with neutral surfaces being the zeroth level set of this function and the degenerate curves being the $\pm 1$ level set. In solid mechanics~\cite{CRISCIONE:00}, a mode $-1$ tensor corresponds to {\em uniaxial compression} and a mode $1$ tensor corresponds to {\em uniaxial extension}. In contrast, a mode $0$ tensor corresponds to {\em pure shear}. At other mode values, we observe biaxial extension and compression.

Degenerate curves and neutral surfaces transition into each other, and understanding their interaction can provide more insight than interpreting them separately. Such an interaction can be better understood and visualized through level set surfaces whose iso-values are between $0$ and $\pm 1$. These surfaces are referred to as {\em mode surfaces} such as the gold-colored surfaces shown in Figure~\ref{fig:teaser}. Note how mode surfaces with mode values close to $\pm 1$ can form vascular structures around the degenerate curves (Figure~\ref{fig:teaser} (a): yellow curves). On the other hand, mode surfaces of values close to $0$ converge onto neutral surfaces (Figure~\ref{fig:teaser} (c): chartreuse surfaces). We refer the readers to our accompanying video for animations of the mode surfaces for this data set and other examples in the paper.

In addition, mode surfaces themselves can also provide an interesting insight into the underlying physics. For example, in Figure~\ref{fig:teaser} (b), the mode surface contains a sheet with a hole at the bottom of the volume. Note that such a feature is not present in degenerate curves and neutral surfaces. Therefore, it is important to study not only tensor field topology but also other feature surfaces such as mode surfaces.

The changes in the geometry and topology of mode surfaces as mode values change can also be meaningful of the underlying tensor field.  We observe the splitting of the bottom sheet in the mode surface (Figure~\ref{fig:teaser} (b)) into four disconnected components (Figure~\ref{fig:teaser} (c)) as well as the thinning of the mode surface (Figure~\ref{fig:teaser} (b)) into a vascular structure (Figure~\ref{fig:teaser} (a)).

Finally, it is often useful to not only visualize mode surfaces but also their differential properties such as surface normal and principal curvature directions. For example, in Figure~\ref{fig:quality_compare_IBFV} (b), we visualize the principal curvature directions using a texture-based method~\cite{Palacios:11}. The normal to a mode surface provides the direction in which the mode changes the most, e.g. from uniaxial extension to uniaxial compression and vice versa. As isosurfaces of the tensor mode function, the principal curvature directions of the mode surfaces are determined by the Hessian of the mode function~\cite{Johnson:04} and can provide information in the ridge and valley lines in the surfaces.

\begin{figure}[!tbh]
\centering%
\subfloat[][A-patches Method~\cite{Palacios:16}]{
\includegraphics[height=2.5in]{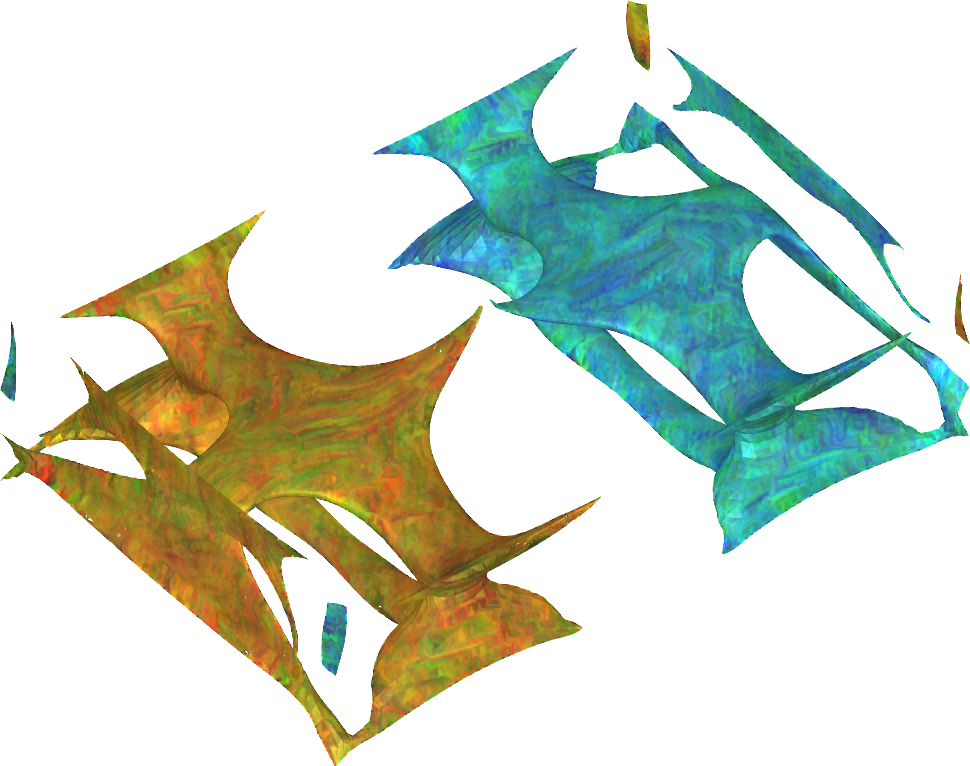}}\\
\subfloat[][Our method]{
\includegraphics[height=2.5in]{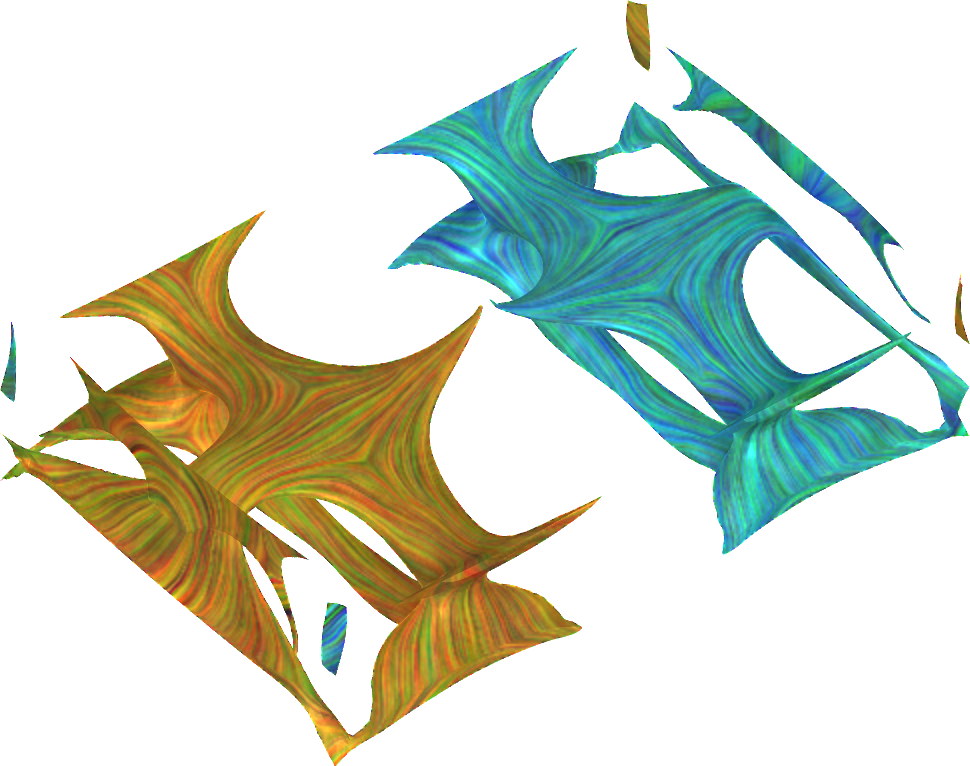}}
\caption[]{The holes and gaps in mode surfaces extracted using the A-patches method~\cite{Palacios:16} lead to a mesh with seams, impeding the computation of important surface properties such as normal and curvature (top: LIC texture lacks clarity in showing the major principal direction in the mode surface). Our method generates seamless meshes, which lead to robust curvature computation (bottom).
}
\label{fig:quality_compare_IBFV}
\end{figure}

Palacios et al.~\cite{Palacios:16} extract mode surfaces using the A-patches method~\cite{luk:2009}, which requires iterative subdivisions of tetrahedra in the mesh. Given a tetrahedron, the iterative subdivision process is not guaranteed to converge. Consequently, a maximum level of subdivision is used to stop the subdivision if the mode surface inside the tetrahedron cannot be completely extracted by the already performed subdivisions. This leads to holes in the extracted mode surface, which can be misinterpreted as the mode surface intersecting the domain boundary (Figure~\ref{fig:quality_compare}~(a) and (c)).

In addition, because the subdivision process is performed on each tetrahedron in the mesh separately, the intersection of the mode surfaces from adjacent tetrahedra (which are curves) are usually not consistent (different number of vertices on the shared face). This leads to gaps in the mode surfaces between adjacent tetrahedra. Instead of a connected mesh, the mode surface generated by Palacios et al.~\cite{Palacios:16} is essentially a triangle soup. While such gaps are not necessarily an issue when visualizing the mode surfaces, they present challenges in computing the normal and principal curvature directions on mode surfaces (as shown in Figure~\ref{fig:quality_compare_IBFV}~(a)).

\begin{figure}[!tbh]
\centering%
\subfloat[][The A-patches method~\cite{Palacios:16}]{
\includegraphics[height=1.2in]{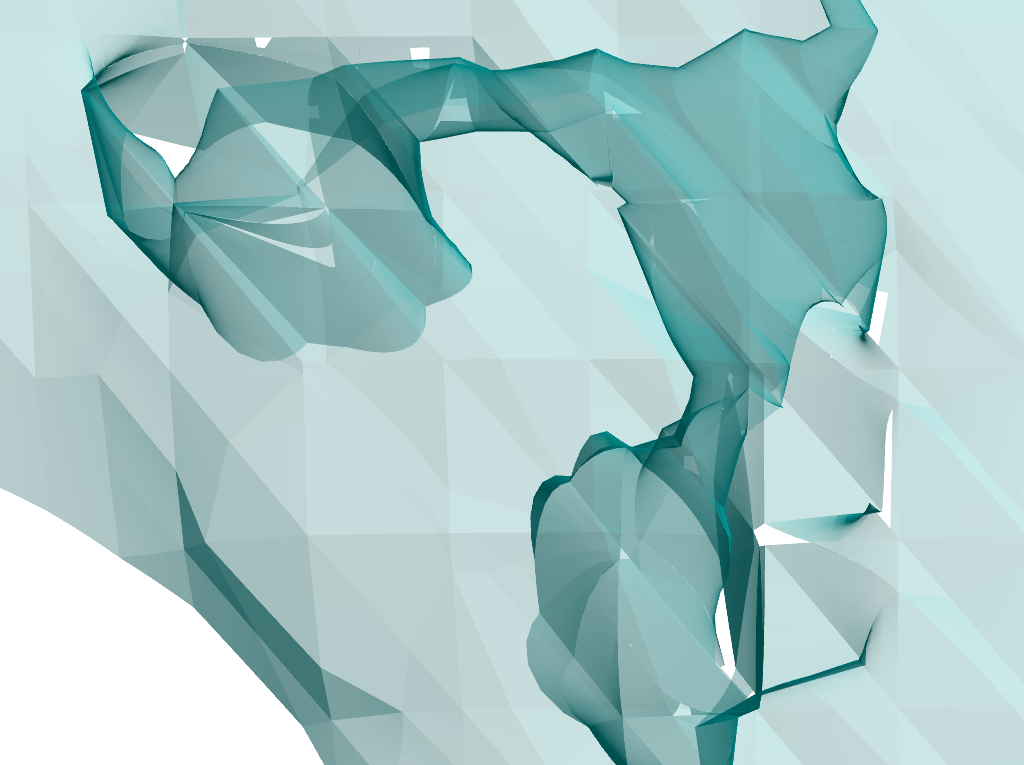}}
\hspace{0.05in}
\subfloat[][Our method]{
\includegraphics[height=1.2in]{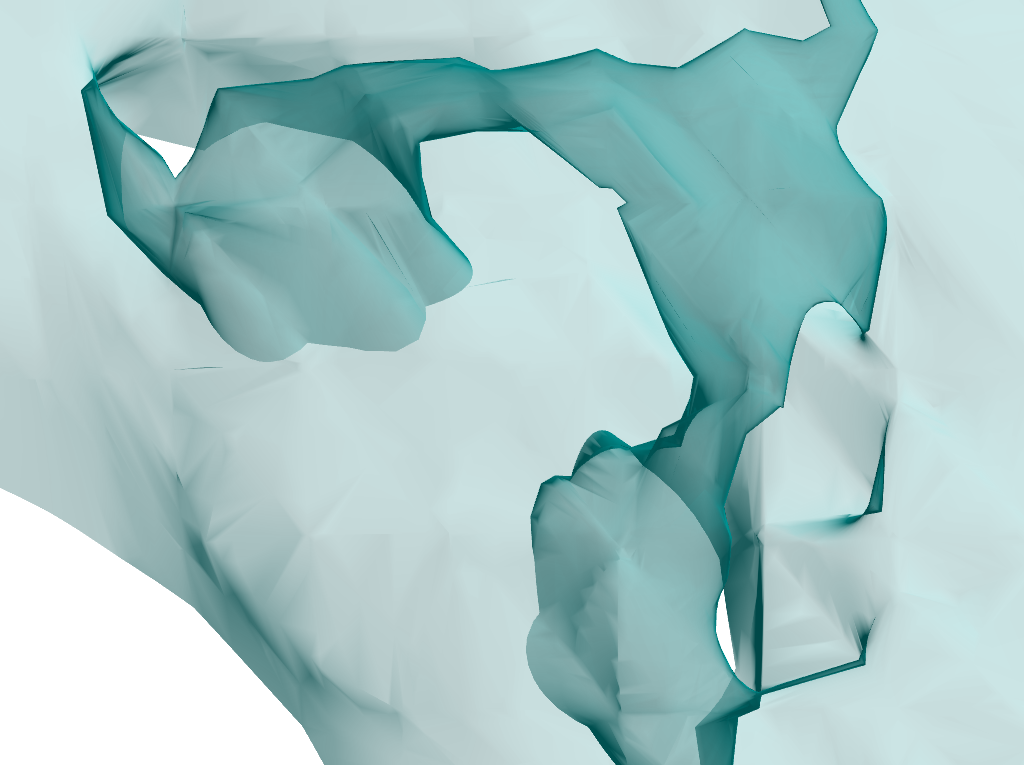}}
\hspace{0.05in}
\subfloat[][The method of~\cite{Roy:18}]{
\includegraphics[height=1.2in]{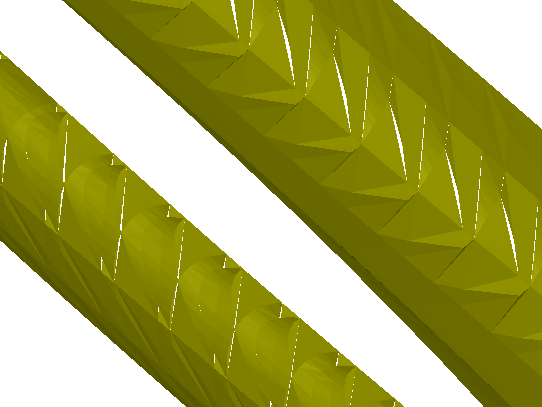}}\hspace{0.05in}
\subfloat[][Our method]{
\includegraphics[height=1.2in]{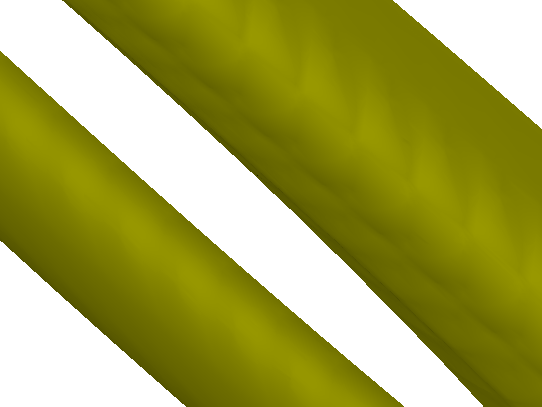}}\hspace{0.05in}
\caption[]{Existing mode surface extraction methods such as~\cite{Palacios:16} (top-left) can have holes due to non-convergence in the extraction process. In addition, these methods~\cite{Palacios:16,Roy:18} can have gaps between mode surfaces extracted from different tets (top-left and bottom-left). These holes and gaps can not only mislead interpretation but also make it difficult to compute differential properties of the mode surfaces such as their curvature tensors. Our method addresses these challenges with a unified framework in which mode surfaces (including neutral surfaces) can be extracted more accurately, faster, and in a seamless fashion (right column).
}
\label{fig:quality_compare}
\end{figure}

Moreover, the subdivision process incurs much computational cost, which can take seconds or even minutes to extract a single mode surface, depending on the time-quality tradeoff parameter (the maximum number of subdivisions).

In this paper, we address the aforementioned challenges by providing a parameterization for all mode surfaces given a 3D linear tensor field. This common parameterization allows us to extract mode surfaces at any accuracy without the need for mesh subdivision. This leads to fewer missing pieces in the extracted surfaces than from the A-patches method. Furthermore, our algorithm extracts mode surfaces inside each face in the mesh, which are then used to find mode surfaces inside each tetrahedron. This makes it straightforward to stitch mode surfaces from adjacent tetrahedra without gaps, resulting in a {\em seamless} mode surface on which differential properties can be computed and visualized (Figure~\ref{fig:quality_compare_IBFV} (b)). Moreover, a mode surface can be extracted under five seconds for our simulation data sets.

Our pipeline also applies to the extraction of neutral surfaces, which is a non-orientable surface~\cite{armstrong1979basic} characterized by a degree-three polynomial~\cite{Roy:18}. To our knowledge, the algorithm and the pipeline we develop for mode and neutral surface extraction, in conjunction with the degenerate curve extraction method of Roy et al.~\cite{Roy:18}, is the first {\em unified} framework for all isosurfaces of the mode function, including neutral surfaces (mode $0$) and degenerate curves (mode $\pm 1$).

To demonstrate the utility of our approach, we apply our tensor field analysis and visualization to solid mechanics applications.

\section{Related Work}\label{sec:related_work}

Tensor field visualization has advanced much in the last decades~\cite{EPFL-BOOK-138668,Kratz:13}.

Delmarcelle and Hesselink\cite{delmarcelle:visualizing} introduce the notion of topology for 2D symmetric tensor fields, which consists of degenerate points. The topological features of 3D symmetric tensor fields are first studied by Hesselink et al.~\cite{hesselink:topology}. Zheng and Pang~\cite{Zheng:04} point out that degenerate points form curves under structurally stable conditions, i.e.\ the structure persists under arbitrarily small perturbations~\cite{damon1998generic}. Several methods have been proposed to extract degenerate curves~\cite{Zheng:05a,Tricoche:08,Palacios:16,Roy:18}. Palacios et al.~\cite{Palacios:17} introduce editing operations for degenerate curves, such as degenerate curve removal, degenerate curve deformation, and degenerate curve reconnection.

Besides feature curves, another type of topological features is surfaces. Zobel and Scheuermann~\cite{Zobel2017} introduce the notion of extremal surfaces for 3D symmetric tensor fields. Raith et al.\cite{raith2019tensor} extract fiber surfaces of tensor fields by linearly interpolating tensor invariants in each tetrahedron. Palacios et al.~\cite{Palacios:16} introduce the notion of neutral surfaces.

Extracting implicit surfaces is a well-researched area~\cite{article}. The most popular technique, Marching Cubes~\cite{Lorensen:87} and its variants, focus on trilinear functions that are degree-three polynomials, while mode surfaces are of degree-six. Using surface extraction methods designed for a lower-degree polynomial, even with a guarantee for topological correctness~\cite{Nielson:04,Renbo:2005,Grosso:17}, can still miss important topological and geometric features of isosurfaces corresponding to a higher-degree polynomial. Implicit surface extraction methods that can handle more general functions with topological guarantees~\cite{Stander:97,BOISSONNAT2005405,Paiva06robustadaptive} usually require $C^2$ functions. Since the mode function in our case is a piecewise degree-six polynomial and $C^0$ at the cell boundaries, it is not clear how to adapt these techniques to the mode function.

The A-patches method~\cite{luk:2009} is a technique to extract algebraic surfaces, which is adapted by Palacios et al.~\cite{Palacios:16} to extract mode surfaces. However, to our knowledge, the extracted surfaces can contain seams at the faces of the mesh.

Roy et al.~\cite{Roy:18} provide algorithms to extract degenerate curves and neutral surfaces from linear tensor fields based on parameterizations of features in the field by their eigenvectors (medium eigenvectors for neutral surfaces and {\em dominant eigenvectors} for degenerate curves). These parameterizations lead to a more accurate extraction of degenerate curves and neutral surfaces than previous techniques based on the A-patches algorithm~\cite{Palacios:16}. However, while their extraction of degenerate curves is seamless, their extraction of neutral surfaces is not. In this paper, we provide a unified parameterization of all mode surfaces of a linear tensor field, which enables a unified pipeline for the seamless extraction of mode surfaces including neutral surfaces and degenerate curves.

Our analysis makes use results from 2D asymmetric tensor fields. Zheng and Pang~\cite{Zheng:05c} introduce the notion of {\em real domains} and {\em complex domains} for 2D asymmetric tensor fields. Zhang et al.~\cite{Zhang:09} provide topological analysis of asymmetric tensor fields on surfaces, which Khan et al.~\cite{Khan:20} extend to a multi-scale framework. Chen et al.~\cite{Chen:11} visualize asymmetric tensor fields on surfaces with a hybrid approach: glyphs for the complex domain and hyperstreamlines for the real domain.

\section{Tensor Background}
\label{sec:math_background}

In this section, we review the relevant math background on tensors and properties of 3D linear symmetric tensor fields.

\subsection{Tensor Basics}

An $n$-dimensional tensor $T$ can be expressed as an $n \times n$ matrix under a given orthonormal basis.

The {\em trace} of a tensor $T=(T_{ij})$ is the sum of its diagonal elements. When the trace is zero, the tensor is referred to as being {\em traceless}. A tensor $T$ can be uniquely decomposed as the sum of the tensor $D=\frac{\trace{T}}{n}\mathbb{I}$ (a multiple of the identity matrix) and a traceless tensor $A=T-D$ (referred to as the {\em deviator} of $T$). Note that $T$ and $A$ have the same set of eigenvectors. The set of all $n \times n$ tensors form a linear space, on which the following inner product of two tensors $R$ and $S$ can be introduced~\cite{spence2000elementary}:

\begin{equation}
  \langle R, S \rangle = \sum_{i=1}^{n}\sum_{j=1}^{n}R_{ij}S_{ij} = \trace(S^T R).
\end{equation}.

With this product, one can define the {\em magnitude} of a tensor $T$ as $||T||=\sqrt{\langle T, T \rangle}$. Another important quantity of a given tensor is its {\em determinant} $|T|$, which is the product of its eigenvalues.

A tensor $T$ is {\em symmetric} if it is equal to its transpose. Otherwise, it is {\em asymmetric}. The eigenvalues of a symmetric tensor are guaranteed to be real-valued, while the eigenvalues of an asymmetric tensor can be either real-valued or complex-valued. Furthermore, the eigenvectors belonging to different eigenvalues of a symmetric tensor form an orthonormal basis. For asymmetric tensors, even when the eigenvalues are real-valued, their respective eigenvectors are not mutually perpendicular.

Given our focus on 3D symmetric tensors and occasional mention of 2D asymmetric tensors, in the remainder of the paper we will drop the word ``symmetric'' for symmetric tensors and keep the word ``asymmetric'' for asymmetric tensors. Moreover, we only consider 3D traceless (symmetric) tensors and therefore will also omit the word ``traceless'' for 3D tensors. In contrast, when discussing 2D asymmetric tensors, we do not assume that they are traceless.

\subsection{3D Tensors and Modes}

A $3\times 3$ tensor $T$ has three eigenvalues $\lambda_1 \ge \lambda_2 \ge \lambda_3$, which are referred to respectively as its {\em major eigenvalue}, {\em medium eigenvalue}, and {\em minor eigenvalue}. Eigenvectors corresponding to $T$'s major eigenvalue are referred to its {\em major eigenvectors}. We can define $T$'s {\em medium eigenvectors} and {\em minor eigenvectors} in a similar fashion.

$T$ is {\em degenerate} if it has repeating eigenvalues. Under structurally stable conditions, a degenerate tensor $T$ has two eigenvalues being the same (referred to as the {\em repeating eigenvalue}). The third eigenvalue is the {\em dominant eigenvalue}. Furthermore, if the dominant eigenvalue is larger than the repeating eigenvalue, $T$ is referred to as being {\em linear degenerate}. If the dominant eigenvalue is smaller than the repeating eigenvalue, $T$ is referred to as being {\em planar degenerate}. The eigenvectors corresponding to the dominant eigennvalue are referred to as the {\em dominant eigenvectors}.

A $3 \times 3$ tensor $T$ is {\em neutral} if its medium eigenvalue is the average of its major and minor eigenvalues. The dominant eigenvalue and eigenvectors are {\em not} well-defined for neutral tensors.

The {\em mode} of a 3D (traceless, symmetric) tensor $T$ is $\mu(T) = 3\sqrt{6}\frac{\det(T)}{\|T\|^3}$, with a range of $[-1, 1]$. As special instances, neutral tensors are mode $0$ tensors, while linear degenerate tensors and planar degenerate tensors correspond to mode $1$ and mode $-1$ tensors, respectively. The eigenvalues of a tensor with a unit tensor magnitude can be expressed in terms of its mode $\mu$ as follows~\cite{irving2003integers,nickalls2006viete,CRISCIONE:00}:

\begin{gather}
	\lambda_1 = \sqrt{\frac{2}{3}} \sin (\frac{1}{3}\arcsin(-\mu) + \frac{2\pi}{3}), \nonumber \\
	\lambda_2 = \sqrt{\frac{2}{3}} \sin (\frac{1}{3}\arcsin(-\mu)), \nonumber \\
	\lambda_3 = \sqrt{\frac{2}{3}} \sin (\frac{1}{3}\arcsin(-\mu) - \frac{2\pi}{3}).    \label{eq:eigenvalue_and_mode}
\end{gather}

A 3D tensor field is a continuous tensor-valued function. A {\em degenerate point} and a {\em neutral point} are where the tensor values are {\em degenerate} and {\em neutral}, respectively. Under structurally stable conditions, degenerate points form curves ({\em degenerate curves})~\cite{Zheng:04}, and {\em neutral points} form surfaces ({\em neutral surfaces})~\cite{Palacios:16}. In general, the $\mu$ level set of the mode function is a surface when $-1<\mu<1$. Such a level set is referred to as a {\em mode}-$\mu$ {\em surface}~\cite{Palacios:16}. Note that both degenerate curves and neutral surfaces are special level sets of the mode $\mu$.

\subsection{3D Linear Tensor Fields}

We focus on 3D linear tensor fields, which can be written in the form of $T(x, y, z)=T_0+xT_x+yT_y+zT_z$ where $T_0$, $T_x$, $T_y$, and $T_z$ are linearly independent 3D tensors. Let $U$ be the set of 3D (traceless, symmetric) tensors, which is a five-dimensional space. Under structurally stable conditions, there exists a 3D tensor $\overline{T}$ that satisfies the following:

\begin{eqnarray}
\langle \overline{T}, T_0 \rangle=\langle \overline{T}, T_x \rangle=\langle \overline{T}, T_y \rangle=\langle \overline{T}, T_z \rangle=0, \nonumber \\
\langle \overline{T}, \overline{T} \rangle=1, \nonumber \\
\det(\overline{T}) \le 0.   \label{T_bar_condition}
\end{eqnarray}

Note that $\overline{T}$ plays an important role in the behavior of the tensor field~\cite{Roy:18}. We refer to $\overline{T}$ as the {\em characteristic tensor} of the linear tensor field.

The set of degenerate points of a 3D linear tensor field can be parameterized by a topological circle~\cite{Roy:18}. That is, the union of the set of mode $1$ points and mode $-1$ points is homeomorphic to the circle. The neutral surface of a 3D linear tensor field can be parameterized~\cite{Roy:18} by $\mathbb{RP}^2$ (referred to as the {\em medium eigenvector manifold}) except for two lines, each of which corresponds to a single point in the medium eigenvector manifold (referred to as a {\em singularity} in the parameterization). Due to the existence of the two singularities, the set of neutral points of a 3D linear tensor field is homeomorphic to $\mathbb{RP}^2$ attached with a handle (thus non-orientable)~\cite{armstrong1979basic}.

In this paper, we provide analysis on the topology of mode surfaces as well as efficient algorithms to extract them.

\section{Mode Surfaces Analysis}
\label{sec:mode_surface}

Given a number $\mu \in (-1, 0)\bigcup (0, 1)$, we wish to {\em seamlessly} extract the mode $\mu$ surface from a piecewise linear tensor field defined on a tetrahedral mesh. To do so, we provide a unified framework in the same spirit of the degenerate curve extraction method of Roy et al.~\cite{Roy:18}. That is, we first extract mode curves on each triangular face in the mesh. Next, we extend the mode curves from the four faces of each tet to extract the mode surface inside the tet. Finally, we stitch the mode surface from adjacent tets across their common faces to generate a seamless mode surface in the whole mesh.

This requires the ability to extract mode surfaces inside a tet at high-quality. Recall that inside each tet of the tetrahedral mesh, the tensor field is linear. In the remainder of this section, we will describe our novel analysis of mode surfaces for 3D linear tensor fields, which leads to a parameterization of such surfaces that enables high-quality extraction. We will state the results of our analysis in the paper and provide their proofs in Appendix~\ref{sec:proofs}.

Similar to the case of degenerate curves~\cite{Roy:18}, we will consider the set of mode $\pm \mu$ points together for our mode surface analysis and extraction. These points satisfy the following degree-six equation:

\begin{equation}
54(\det(T))^2 - \mu(T)^2\|T\|^6 =0, \label{eq:generalized_mode_formula}
\end{equation}

\noindent and we refer to the collection of such points as the {\em generalized mode $\mu$ surface}. As in the case of neutral surfaces, we show that a generalized mode $\mu$ surface can also be parameterized by its medium eigenvectors (Theorem~\ref{thm:medium_eigenvector_manifold} in Appendix~\ref{sec:proofs}). This parameterization is based on the following 2D asymmetric tensor field defined on the unit sphere:

\begin{equation}
 A(v_2) = R_{\frac{\theta}{2}+\frac{\pi}{4}} \overline{T}'(v_2) R_{\frac{\theta}{2}-\frac{\pi}{4}} \label{eq:asymmetry_field}
\end{equation}

\noindent in which $v_2$ is a unit vector, $\overline{T}'(v_2)$ is the projection of $\overline{T}$ onto the plane whose normal is $v_2$, $\theta = \arcsin(\sqrt{3}\tan(\frac{1}{3}\arcsin(\mu)))$, and $R_\phi=\begin{pmatrix}
            \cos\phi & -\sin\phi \\
            \sin\phi & \quad \cos\phi
          \end{pmatrix}$.

Figure~\ref{fig:mode_planes} illustrates the asymmetric tensor field $A$ with an example 3D linear tensor field created manually. There are four types of regions on the aforementioned sphere (the medium eigenvector manifold): (1) grey, (2) cyan, (3) magenta, and (4) blue.

\begin{figure}[tbh]
\centering%
\subfloat[][Medium eigenvector manifold]{
\includegraphics[height=1.65in]{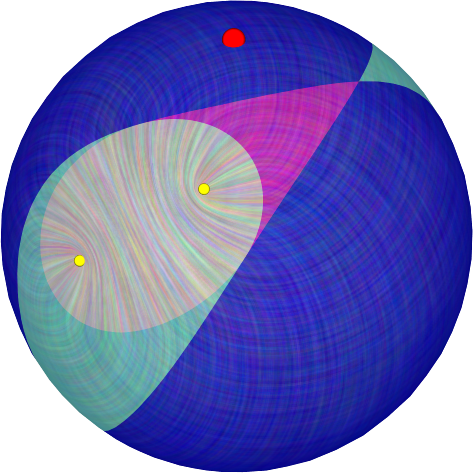}}
\hspace{0.05in}
\subfloat[][Generalized mode $\mu$ surfaces]{
\includegraphics[height=1.65in]{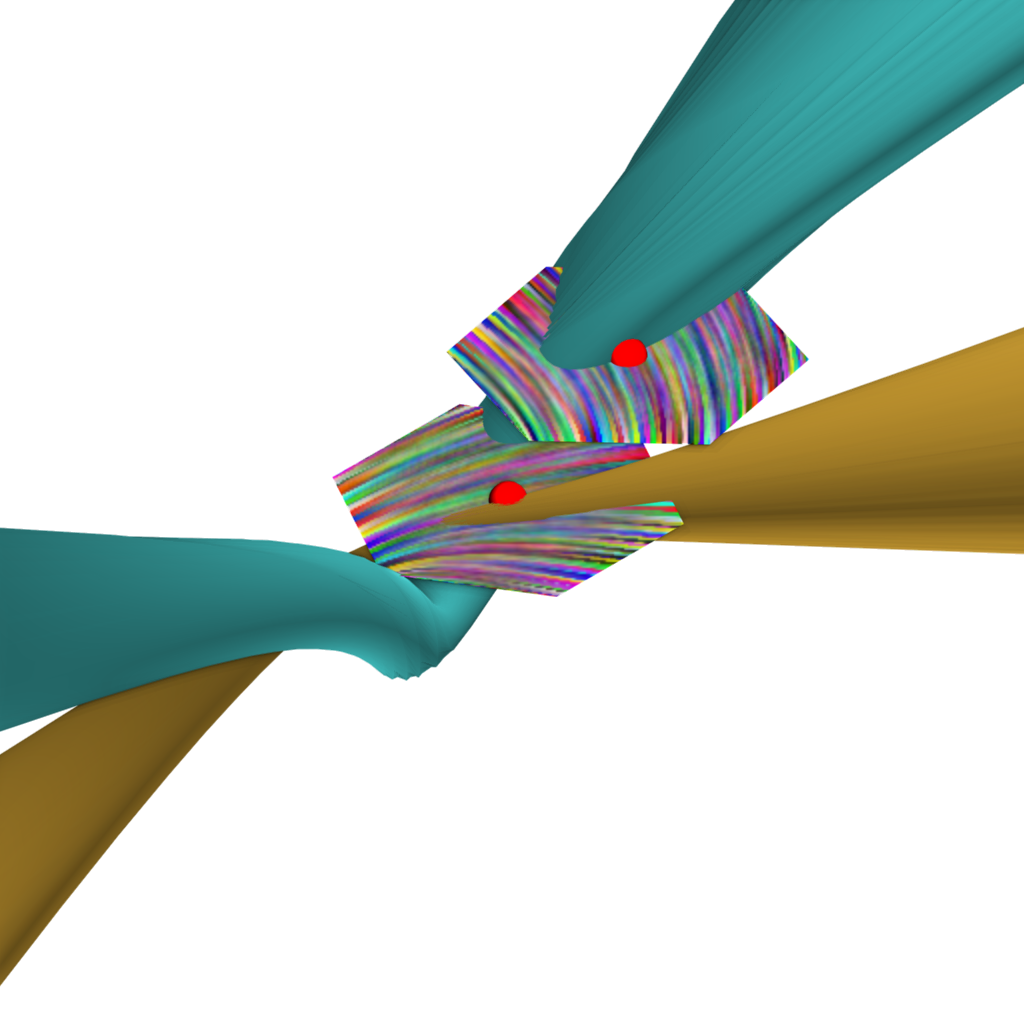}}\hspace{0.05in}
\caption[]{A generalized mode $\mu$ surface (right) can be parameterized by the medium eigenvector manifold (left). Each pair of antipodal points in the blue region of the medium eigenvector manifold (left: red dot) gives rise to two points in the generalized mode $\mu$ surfaces (right: red dots), one with a positive mode value (right: the red dot on the teal surface) and one with a negative mode value (right: the red dot on the gold surface). The dominant eigenvector directions at these points (right: shown with LIC textures in the planes) are given by the eigenvectors of an asymmetric tensor field in the medium eigenvector manifold (left: LIC texture directions). Points in the cyan and magenta regions in the medium eigenvector manifold correspond to two positive-mode points and two negative-mode points, respectively. Points in the grey region do not correspond to any point in the generalized mode $\mu$ surface.  }
\label{fig:mode_planes}
\end{figure}

The grey region is the {\em complex domain} of the asymmetric tensor field $A(v_2)$, i.e.\ with complex eigenvalues. Unit vectors in this region cannot appear as the medium eigenvector of any tensor in the 3D linear tensor field. That is, there are no points in the generalized mode $\mu$ surface that correspond to these unit vectors. Note that if a unit vector $v$ is in the complex domain, so is $-v$. Therefore, the complex domain of the asymmetric field respects the antipodal symmetry.

The cyan, magenta, and blue regions together form the {\em real domain} of the asymmetric tensor field. Each pair of antipodal points in the medium eigenvector manifold inside these regions correspond to two points in the generalized mode $\mu$ surface. If both points have the positive mode $\mu$, we color the original pair in the sphere with cyan. If both points have the mode $-\mu$, we color the pair in the sphere magenta. If one point has the mode $\mu$ and other $-\mu$, we color the corresponding pair in the sphere blue.

Note that the major and minor eigenvectors of the asymmetric tensor field $A$ give rise to the dominant eigenvectors of the 3D tensor field at the corresponding points in the generalized mode $\mu$ surface. An example is shown in Figure~\ref{fig:mode_planes}. In (a), a unit vector $v_2$ in the sphere (highlighted by a red dot) corresponds to two points in the generalized mode $\mu$ surface (b), each of which also highlighted by a red dot. The point in the teal surface (b) has a positive mode while the point in the gold surface has a negative mode. The planes normal to the medium eigenvectors are shown for both points (b). Notice that the two planes are parallel, i.e.\ the medium eigenvectors are the same for both points, which correspond to the vector in the medium eigenvector manifold (the red dot in (a)). The dominant eigenvector directions at the points in the generalized mode $\mu$ surface ((b): the LIC textures in the two planes) together match the major and minor eigenvector directions of $A(v_2)$ ((a): the LIC directions at the red dot).

\begin{figure*}[tbh]
    \centering%
    $\begin{array}{@{\hspace{0.0in}}c}
    \includegraphics[width=6.6in]{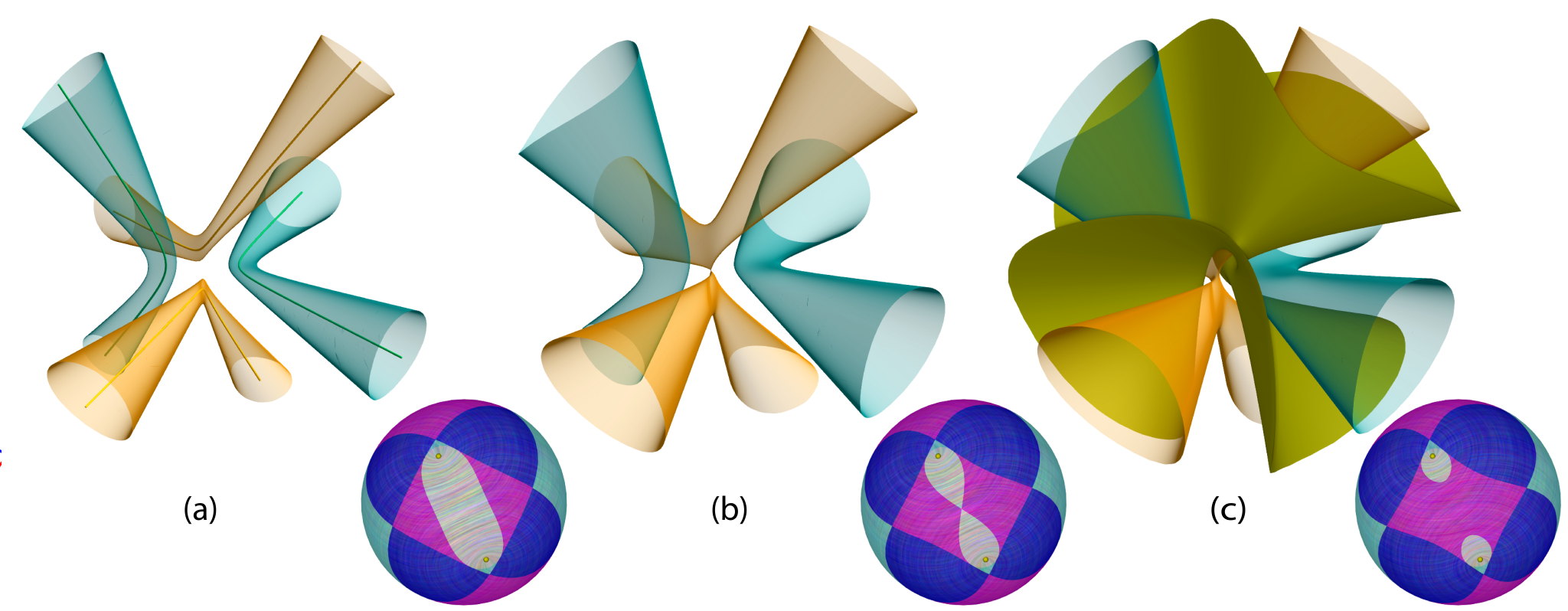}
    \end{array}$
\caption[]{Given a 3D linear tensor field, the positive mode surfaces (teal) converge to linear degenerate curves (green) and the negative mode surfaces (gold) converge to planar degenerate curves (yellow) when $\mu$ approaches $\pm 1$ (a). In this case, the complex domain consists of two loops respecting the antipodal symmetry of the sphere. Together, the generalized mode $\mu$ surface has a topology of a torus. In contrast, when $\mu$ approaches $0$ (c), the positive mode and negative mode surfaces together converge onto the neutral surface (chartreuse). The complex domain now consists of two pairs of antipodal loops, and the generalized mode $\mu$ surface has a topology of a double-torus. The bifurcation between the two cases occurs (b) when the complex domain resembles a figure-eight and the generalized mode $\mu$ surface has a non-manifold point.    }
\label{fig:topologymodesurface}
\end{figure*}

Due to the symmetry in the tensor field, $-v_2$ corresponds to the same two points in the generalized mode $\mu$ surface (Theorem~\ref{thm:medium_eigenvector_manifold} in Appendix~\ref{sec:proofs}). However, the minor eigenvector of $A(v_2)$ becomes the major eigenvector of $A(-v_2)$ and the major eigenvector of $A(v_2)$ becomes the minor eigenvector of $A(-v_2)$. To make our parameterization easier for subsequent processing, we choose the unit vector from $v_2$ and $-v_2$ so that its major eigenvector gives rise to the dominant eigenvector of the corresponding point in the generalized mode $\mu$ surface. This scheme removes the ambiguity in our parameterization by converting the two-to-two correspondence ($\pm v_2$ to the two points in the generalized mode $\mu$ surface) to a one-to-one correspondence.

Finally, points on the boundary between the real domain and complex domain ({\em complex domain boundary}) have one real eigenvalue of multiplicity of two. The complex domain boundary also satisfies the antipodal symmetry. That is, if a vector $v_2$ is in the complex domain boundary, so is $-v_2$. Moreover, $v_2$ and $-v_2$ correspond to exactly one point in the generalized mode $\mu$ surface.

Given a 3D linear tensor field, all of its generalized mode $\mu$ surface can be parameterized by the same sphere. The following equation characterizes the complex domain boundary (Theorem~\ref{thm:complex_boundary_equation} in Appendix~\ref{sec:proofs}):

\begin{equation}
\frac{1}{2} - v_2^T \overline{T}^2 v_2 + \frac{1}{4} \cos^2\theta(v_2^T \overline{T} v_2)^2 = 0 \label{eq:real_boundary_0}
\end{equation}

\noindent where $\theta = \arcsin(\sqrt{3}\tan(\frac{1}{3}\arcsin(\mu)))$ is the same as that in Equation~\ref{eq:asymmetry_field}. We illustrate the changes in the geometry and topology of generalized mode $\mu$ surfaces with an example tensor field in Figure~\ref{fig:topologymodesurface}. Three generalized mode $\mu$ surfaces (teal and gold surfaces) are shown in (a), (b), and (c), respectively. Their corresponding eigenvector manifolds are also shown (the three spheres). In addition, we show the degenerate curves (the yellow and green curves in (a)) and the neutral surfaces (the chartreuse surface in (c)).

When $\mu=1$, the mode surface is essentially the degenerate curves. The complex domain in the corresponding medium eigenvector manifold (not shown) consists of two connected components, satisfying the antipodal symmetry. When $\mu$ decreases, the complex domain shrinks in size (a). At this stage, the complex domain boundary still consists of two curves satisfying the antipodal symmetry (one is visible in (a)). Since each pair of antipodal points on the complex domain boundary corresponds to exactly one point in the generalized mode $\mu$ surface, the latter can be topologically constructed by removing the complex domains from the sphere (punching two holes) and gluing the surface along the complex domain boundary based on the antipodal symmetry. This results in a torus.

Note that as $\mu$ decreases, the complex domain grows smaller (Corollary~\ref{cor:domain_decreasing} in Appendix~\ref{sec:proofs}). When $\mu$ reaches $\sqrt{1-\overline{\mu}^2}$ where $\overline{\mu}$ is the mode of $\overline{T}$, the complex domain boundary touches itself (Figure~\ref{fig:topologymodesurface} (b)). The generalized mode $\mu$ surface at this point is a non-manifold.

After this, the complex domain continues to shrink and now has four connected components. Performing the same topological surgery (removing the complex domains and gluing along the complex domain boundary) results in a double-torus, i.e.\ a surface with two handles (Figure~\ref{fig:topologymodesurface} (c)).

Finally, when $\mu=0$, the complex domain disappears, and the complex domain boundary degenerates into two pairs of antipodal points in the medium eigenvector manifold. They are precisely the singularities in the parameterization~\cite{Roy:18}, each of which corresponds to a straight line in the neutral surface. Using the eigenvectors of $\overline{T}$ as the coordinate system for the medium eigenvector manifold, we have the singularities being $\begin{pmatrix} \pm \sqrt{\frac{\overline{\lambda}_1-\overline{\lambda}_2}{\overline{\lambda}_1-\overline{\lambda}_3}} & 0 & \pm \sqrt{\frac{\overline{\lambda}_2-\overline{\lambda}_3}{\overline{\lambda}_1-\overline{\lambda}_3}} \end{pmatrix}$.

We refer the reader to Theorem~\ref{thm:bifucation} in Appendix~\ref{sec:proofs} for the proof of the above analysis. Note that neutral surfaces are the only non-orientable mode surfaces, which highlights their topological significance.

The complex domain boundary is characterized by a degree-six polynomial, which makes its extraction challenging. Fortunately, similar to degenerate curves in a 3D linear tensor field which can be characterized by an elliptical loop, the complex domain boundary can also be parameterized as illustrated in Figure~\ref{fig:complex_para}. Due to the four-fold symmetry in the complex domain boundary, it is sufficient to focus on one quarter of the curve, which, in the coordinate system
	\begin{align}
		\alpha &=  v_2^T \overline{T} v_2, \\
		\beta  &=  v_2^T \overline{T}^2 v_2
	\end{align}
	is characterized by
	\begin{align}
		0 &= 1 - 2 \beta + \frac{1}{2} \cos^2\theta \alpha^2.
		\label{eqn:real_complex_bound_alpha_mainpaper}
	\end{align}

This is the equation of a parabola, which can be parameterized by $\alpha$ (Lemma~\ref{lemma:complex_boundary_parameterizable} in Appendix~\ref{sec:proofs}). Each $\alpha$ gives one corresponding $\beta$ from Equation \ref{eqn:real_complex_bound_alpha_mainpaper} and thus one point in each of the four quarters of the complex domain boundary (Figure~\ref{fig:complex_para}).

\begin{figure}[!htb]
\begin{center}
\includegraphics[width = 2.0in]{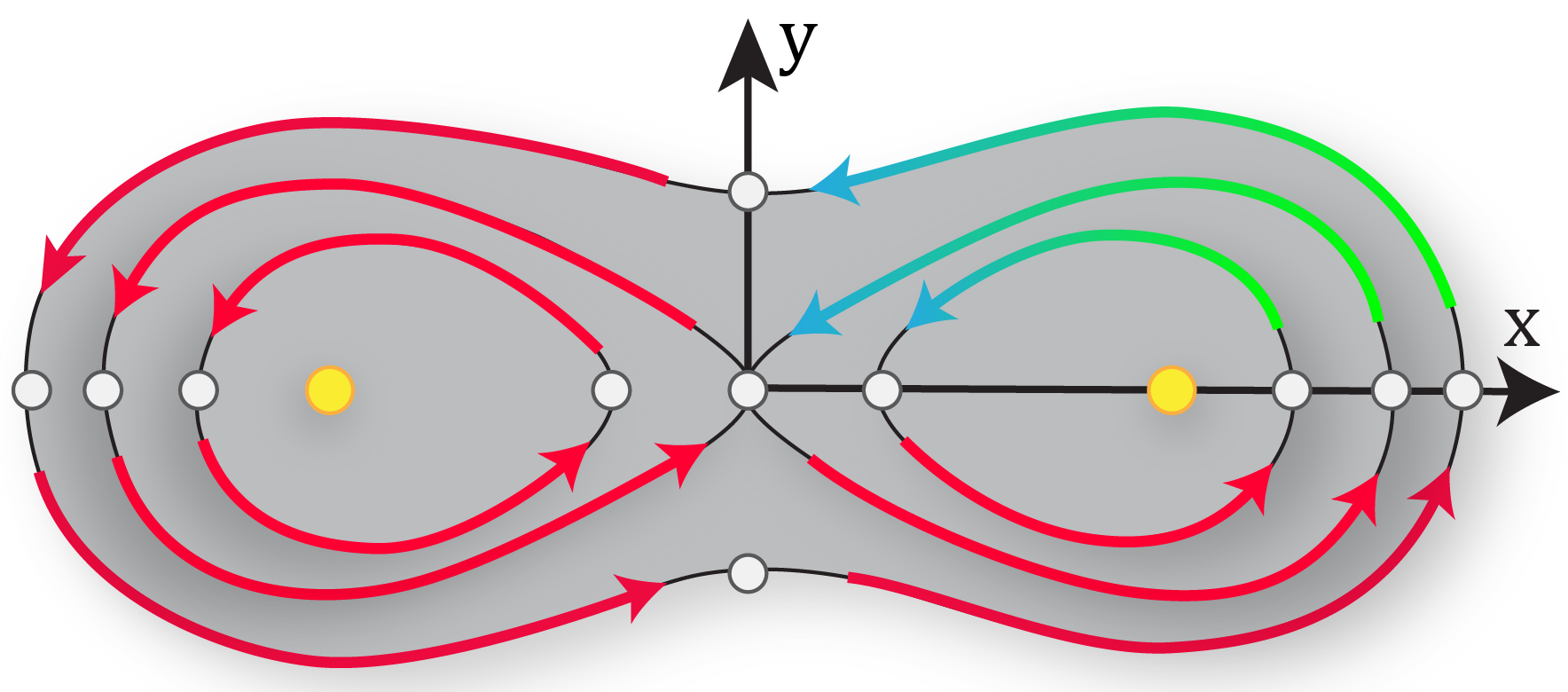}
\end{center}
\caption{The complex domain boundary has a four-fold symmetry, with each quarter parameterizable. A quarter segment can intersect the horizontal axis and the vertical axis at one point each, making the complex domain a single region (the outermost loop). This corresponds to the case shown in Figure~\ref{fig:topologymodesurface} (a). In contrast, a quarter segment can also intersect the horizontal axis only (inner-most case), splitting the complex domain into two regions (the innermost loop). This corresponds to the case shown in Figure~\ref{fig:topologymodesurface} (c). The bifurcation occurs when the quarter segments pass through the origin (the loop between the outermost and innermost loops), which corresponds to Figure~\ref{fig:topologymodesurface} (b) .
}
\label{fig:complex_para}
\end{figure}

\section{Extraction of Mode Surfaces}
\label{sec:extraction}

In this section, we describe our mode surface extraction algorithm, the input of which is a tetrahedral mesh, at whose vertices tensors are given. These tensor values are linearly interpolated into the faces and interiors of each tet. This results in a piecewise linear 3D tensor field, i.e.\ inside each tet, the tensor field is linear. Similarly, inside each face, we have a 2D linear tensor field.

\begin{figure}[!htb]
\begin{center}
   \includegraphics[width=2.6in]{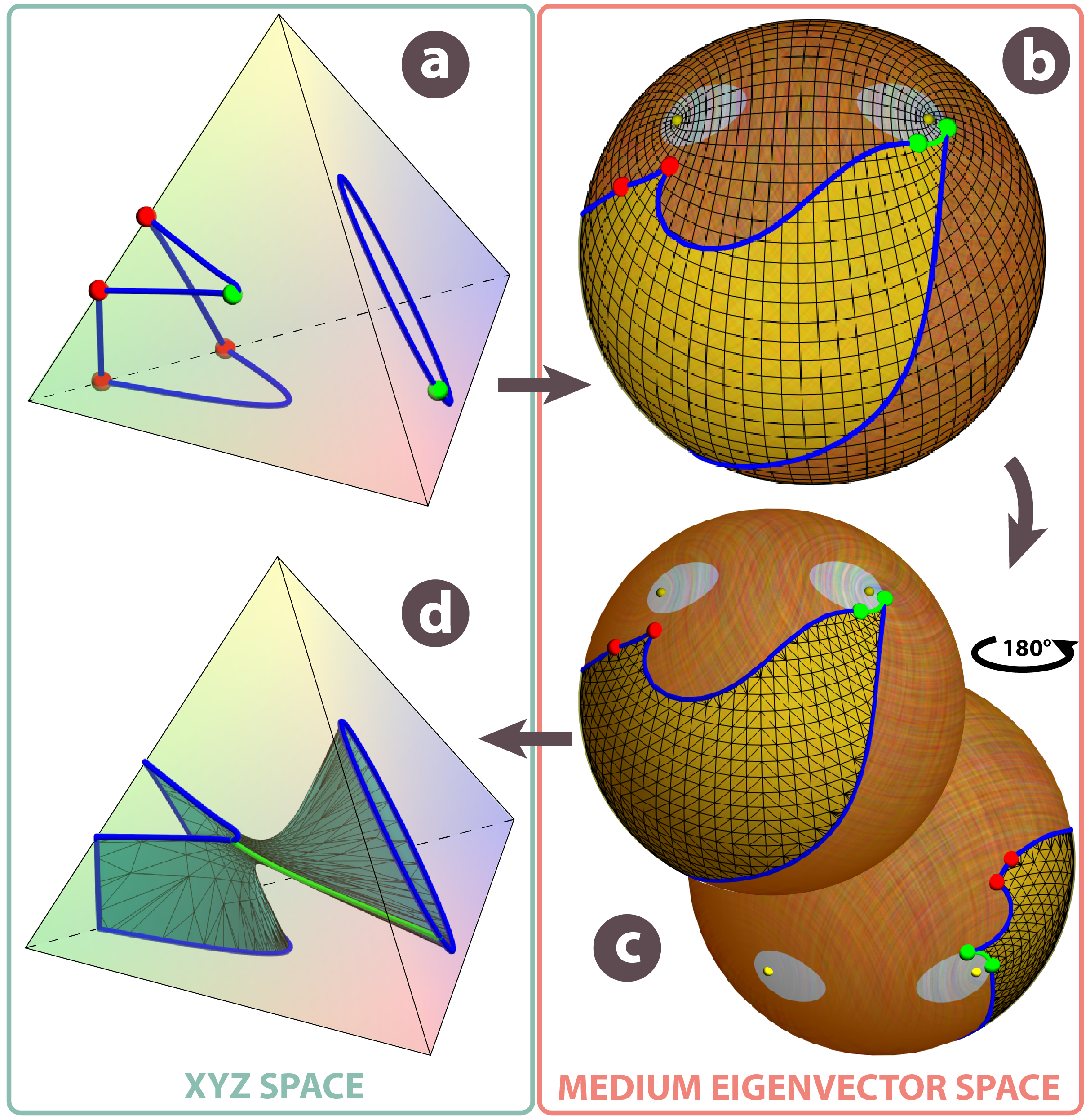}
\caption{The pipeline of our mode extraction algorithm for a given tet: starting from mode curves extracted from the faces of a tet (a), we identify the region in the medium eigenvector manifold bounded by the aforementioned curves (b). The region is then triangulated (c) and mapped back to the $XYZ$ space to give the mode surface in the tet (d).  }\label{fig:mode_extraction}
\end{center}
\end{figure}

The pipeline of our unified mode surface (including neutral surfaces) extraction is illustrated in Figure~\ref{fig:mode_extraction} with an example tensor field. First, we extract mode curves inside each face in the mesh, including internal loops. Next, for each tet with known mode curves in any of its faces, we extract the mode surface inside the tet. Finally, mode surfaces extracted from the tets will be stitched together along shared faces. We describe the detail of each step next.

\subsection{Mode Curve Extraction inside a Face}

Mode curves inside a plane can consist of a number of open curves and loops. If an open curve does not intersect any edge of a triangle in our mesh, the curve must be entirely outside the triangle and is not part of mode curves for the triangle. In contrast, there exist loops that are entirely inside the triangle (thus valid) but do not intersect any of the edges of the triangle. Thus, detecting mode curves for a triangle by only detecting their intersections with the boundary edges can miss the inner loops.

To overcome this problem, we use a property from Morse theory~\cite{matsumoto2002introduction}, which states that for any level set loop of a function there must be at least one local maximum or minimum of that function enclosed by the loop. Based on this insight, we first extract the critical points of the mode function inside the plane containing the triangle. Next, for each extremum (a critical point that is not a saddle) inside the triangle, we insert a new vertex at the point and subdivide the triangle into three triangles. In Theorem~\ref{thm:at_most_four_extrema} (Appendix~\ref{sec:proofs}) we show that there are at most four extrema inside a plane $P$ containing the triangle. Only the ones that are inside the triangle are valid. Thus, the aforementioned subdivision process is executed at most four times for each face in the mesh. To find the critical points of the mode function inside a plane $P$, we make use of the fact that they must be solutions to the following system of polynomial equations:

\begin{eqnarray}
vh(u, v, w)=wg(u, v, w),  \label{eq:critical_ax_main} \\
wf(u, v, w)=uh(u, v, w),  \label{eq:critical_ay_main}\\
u^2 + v^2 + w^2 = 1
\end{eqnarray}

\noindent where $\nabla \det(u T_1 + v T_2 + w T_3)=(f(u,v,w), g(u,v,w), h(u,v,w))$. Here, $T_1$, $T_2$, and $T_3$ is an orthonormal basis for the set of tensor values inside $P$ given by the tensor field. In addition, $(u, v, w)$ is a unit vector such that $\frac{T(x,y)}{||T(x,y)||}=u(x,y) T_1 + v(x,y) T_2 + w(x,y) T_3$. Note that Equations~\ref{eq:critical_ax_main} and~\ref{eq:critical_ay_main} are both homogeneous cubic polynomials. Moreover, if $(u,v,w)$ is a solution, so is $(-u, -v, -w)$. Furthermore, both $(u,v,w)$ and $(-u, -v, -w)$ correspond to the same tensor, i.e.\ the same critical point. Let $u'=\frac{u}{w}$ and $v'=\frac{v}{w}$. The original system of equations is transformed into the following system of two cubic equations:

\begin{eqnarray}
v'\overline{h}(u', v')=\overline{g}(u', v'), \nonumber \\
\overline{f}(u', v')=u'\overline{h}(u', v')
\end{eqnarray}

\noindent where $\overline{f}$, $\overline{g}$, and $\overline{h}$ are non-homogeneous cubic polynomials derived respectively from $f$, $g$, and $h$ with the change from $u, v, w$ to $u', v'$. According to B\'ezout's theorem~\cite{Fulton:74}, this system of cubic equations is equivalent to a degree-nine polynomial which has nine solutions. However, we show that two of the solutions are spurious and four other solutions correspond to degenerate points in the plane $P$. Moreover, the spurious solutions can be found by solving $h(u, v, 0)=0$ and $u^2+v^2=1$, while the degenerate points can be found using the method of Roy et al.~\cite{Roy:18}. Once these solutions are factored out, we obtain a cubic polynomial whose roots, when real-valued, are additional critical points (local extrema and saddles). We compute these roots using the Eigen library~\cite{eigenweb}. For each solution in the form of $u$, $v$, and $w$ values, we recover $T=u T_1 + v T_2 + w T_3$ which we use to find $(x, y)$ such that $T(x, y)=T$.

We can extract the intersection points of the mode curves with the edges in the triangles (including the subdivided triangles). Starting from these intersection points, we perform numerical tracing in the direction perpendicular to the gradient of the mode function. This guarantees that all internal loops in the mode curves are found. To trace out the face intersections we use a numerical ODE integrator. This leads to mode curves whose ends are on the edges of the face. Tracing is finished when we exit the (possibly subdivided) triangle. We then run a numerical root-finding algorithm~\cite{Alefeld1995} on the interpolation function produced by the ODE integrator to accurately find the point where it exits, which is connected to the closest edge intersection point.

\subsection{Mode Surface Extraction inside a Tetrahedron}

Once we have extracted mode curves from all the faces in the mesh, we proceed to extract mode surfaces from inside each tet.

First, we gather the mode curves extracted from the four faces of the tet and stitch open curves (those intersecting the edges of the tet) into loops (Figure~\ref{fig:mode_extraction} (a): blue curves).

Next, for each sample point on the aforementioned mode curves, we find the corresponding point on the medium eigenvector manifold for the 3D linear tensor field in the tet. Therefore, the mode curves (represented as polylines) on the boundary of the tet lead to a set of curves (also represented as polylines) in the medium eigenvector manifold (Figure~\ref{fig:mode_extraction} (b): blue curves).

Recall that a pair of antipodal points on the complex domain boundary corresponds to the same point in the mode surface. Therefore, a connected component in the polyline in the tet's boundary can become disconnected in the medium eigenvector manifold at the complex domain boundary (green points in Figure~\ref{fig:mode_extraction} (a) and (b)). To overcome this problem, we need to not only find the exact intersection points of the polylines with the complex domain boundary but also generate curves connecting such points in the medium eigenvector manifold, which correspond to segments of the complex domain boundary (Figure~\ref{fig:mode_extraction} (b): green curves).

Detecting the existence of such points is relatively straightforward as each such point corresponds to an antipodal pair in the medium eigenvector manifold. Therefore, along a mode curve, such a point divides the curve into two disconnected components, whose medium eigenvectors are situated on opposite hemispheres of the medium eigenvector manifold. Consequently, for every segment in the polyline, we check the dot product between the medium eigenvectors of the two end points. If the sign of the dot product is negative, we mark the segment as intersecting the complex domain boundary and proceed to find the exact location of the intersection point through numerical root-finding using Equation~\ref{eq:real_boundary_0}. This segment is then split into two with the insertion of the newly found intersection point.

Once we have found all the complex domain boundary points in the faces of the tet, we compute their corresponding antipodal point pairs on the medium eigenvector manifold. To decide which segments on the complex domain boundary are part of the mode surface, we make use of the parameterization for the complex domain boundary (Figure~\ref{fig:complex_para}). This parameterization allows us to select a new point on the complex domain boundary and check whether its corresponding point on the mode surface is inside or outside the tet. This information is then used to decide which segments along the complex domain boundary are part of the mode surfaces inside the tet. Notice that our method is similar to the method of Roy et al.~\cite{Roy:18} for computing degenerate curves, which are parameterizable by an ellipse.

The segments along the complex domain boundary (Figure~\ref{fig:mode_extraction} (b): green curves) and the polylines representing the mode curves from the faces (Figure~\ref{fig:mode_extraction} (b): blue curves) bound the region in the medium eigenvector manifold that corresponds to the mode surface inside the tet (Figure~\ref{fig:mode_extraction} (b): yellow region). To find the interior of this region, we perform a constrained Delaunay triangulation on the set of sample points (from mode curves and complex domain boundary segments). Since Delaunay triangulation is normally defined on a plane, we take the stereographic projection of the vertices as the stereographic projection maps circles to circles and thus preserves the Delaunay condition. This leads to a triangulation of the region with rather poor aspect ratios for the triangles.

To improve the quality of the tessellation, we add more points inside the region on the medium eigenvector manifold. Given our unified approach of using the medium eigenvector manifold for mode surface extraction (including neutral surfaces), we make use of the following quad parameterization of the medium eigenvector manifold based on the singularities in the neutral surfaces.

\begin{align}
	x &= \cos\theta\sqrt{f\cos^2\eta + 1},  \nonumber \\
	y &= \cos\theta \sin\eta, \nonumber \\
	z &= \cos\eta,
\end{align}

\noindent where $f = \frac{2\lambda_2(\overline{T})-\lambda_3(\overline{T}) -\lambda_1(\overline{T})}{\lambda_3(\overline{T})-\lambda_2(\overline{T})}$, $0 \le \eta < \pi$, and $0 \le \theta < 2 \pi$. This parameterization leads to a perfect quadrangulation of the medium eigenvector manifold (Figure~\ref{fig:mode_extraction} (b): the quad grid) with the only irregular vertices in the quadrangulation being the singularities in the medium eigenvector manifold (Figure~\ref{fig:mode_extraction} (b): yellow dots). After the grid points inside the region are added, we perform a second constrained Delaunay triangulation with both boundary and interior sample points. This leads to improved aspect ratios in the triangulation (Figure~\ref{fig:mode_extraction} (c)).

Finally, we need to map the sample points from the medium eigenvector manifold back to the $XYZ$ space while preserving the connectivity among them to construct the mode surfaces inside the tetrahedron. To map a point $v_2$ in the medium eigenvector manifold to its corresponding point in the $XYZ$ space, we need to identify the tensor $T$ whose medium eigenvector is $v_2$. The eigenvalues of $T$ can be computed using~Equation~\ref{eq:eigenvalue_and_mode} given the mode value $\mu$. The dominant eigenvector of $T$ can be computed from the asymmetric tensor $A(v_2)$ (Equation~\ref{eq:asymmetry_field}). This gives us the tensor $T$. We can find its corresponding point in the $XYZ$ space by solving a system of linear equations~\cite{Roy:18}.

\subsection{Stitching Mode Surfaces from Adjacent Tets}

To get a seamless mode surface, we need to stitch the sheets from different tets across their common faces.

This step is relatively straightforward as we have computed mode curves in the faces first. Therefore, mode surfaces from neighboring tets already have matching polylines with the same vertices.

\subsection{Neutral Surface Extraction}

We use the same pipeline for both mode surfaces and neutral surfaces, with the following two differences.

First, neutral surfaces and curves are characterized by degree-three polynomials. Therefore, when finding the intersection with an edge, we can directly compute them using the cubic formula.

Second, since there is no complex domain in the medium eigenvector manifold for neutral surfaces, we do not need to detect the intersection of mode curves with the complex domain boundary. Instead, we need to find the singularities in the medium eigenvector manifold, which correspond to straight lines in the $XYZ$ space. Their intersection with the boundary faces can be easily computed. Then the intersection points corresponding to the same singularity are connected using a line segment.

Other than these differences, our framework handles mode surfaces and neutral surfaces in a unified fashion.

\section{Performance}
\label{sec:performance}

Our methods can extract mode surfaces and neutral surfaces with higher quality, i.e.\ the extracted surfaces are seamless and more accurate than existing mode surface and neutral surface extraction methods~\cite{Palacios:16,Roy:18} (Figure~\ref{fig:quality_compare}). The seamlessness of the extracted surfaces enables additional information about mode surfaces to be computed more robustly, such as principal curvature directions (Figure~\ref{fig:quality_compare_IBFV}).

In addition, our technique is faster than existing techniques. On average, our neutral surface extraction method is about $1.8$ times faster than the hybrid method introduced by Roy et al.~\cite{Roy:18}, and our mode surface extraction method is $5.3$ times faster than the A-patches method in~\cite{Palacios:16}. 
Measurements were taken on a computer with Intel(R) Xeon(R) E$3$-$2124$G CPU$@$ $3.40$ GHz, $64$GB of RAM, and an NVIDIA Quadro P$620$ GPU. We used four test data sets: a block with a single compression force (Figure~\ref{fig:teaser}: $480000$ tets), a block with a compression force and an extension force (Figure~\ref{fig:07blockSince}: $450000$ tets), a block with three compression forces (Figure~\ref{fig:26_frontAndBackView}: $384000$ tets), and a block with a twisting compression force (Figure~\ref{fig:24_frontAndBackView} in Appendix~\ref{sec:additional_example}: $286416$  tets).

%

\section{Applications}
\label{sec:apps}

The application of compressive loads is ubiquitous in engineering where solids are extruded to designed geometries and liquids are compressed in combustion engines, and in medical research where human or animal organs are compressed to enable successful imaging processes. On the other hand, compression is a challenging state to model and elucidate.
In fluid mechanics, Navier-Stokes modeling~\cite{chorin1990mathematical} applies to incompressible fluids and needs a completely different solution scheme when the fluids become compressible. In solid mechanics, a scalar quantity referred to as the Poisson’s ratio~\cite{greaves2011poisson} is used to dial from polymers, which hardly compress, to materials that change volume during compression. Here we examine four scenarios of compression in a solid block by visualizing the mode surfaces of the stress tensor fields. We vary boundary conditions on this block to reveal the material behavior embedded in the stress tensors. Three scenarios (Figure~\ref{fig:24_AbaqusU2}) are discussed in this section while the fourth is covered in Appendix~\ref{sec:additional_example}.

\begin{figure}[!b]
\centering%
\subfloat[][]{\includegraphics[height=0.5in]{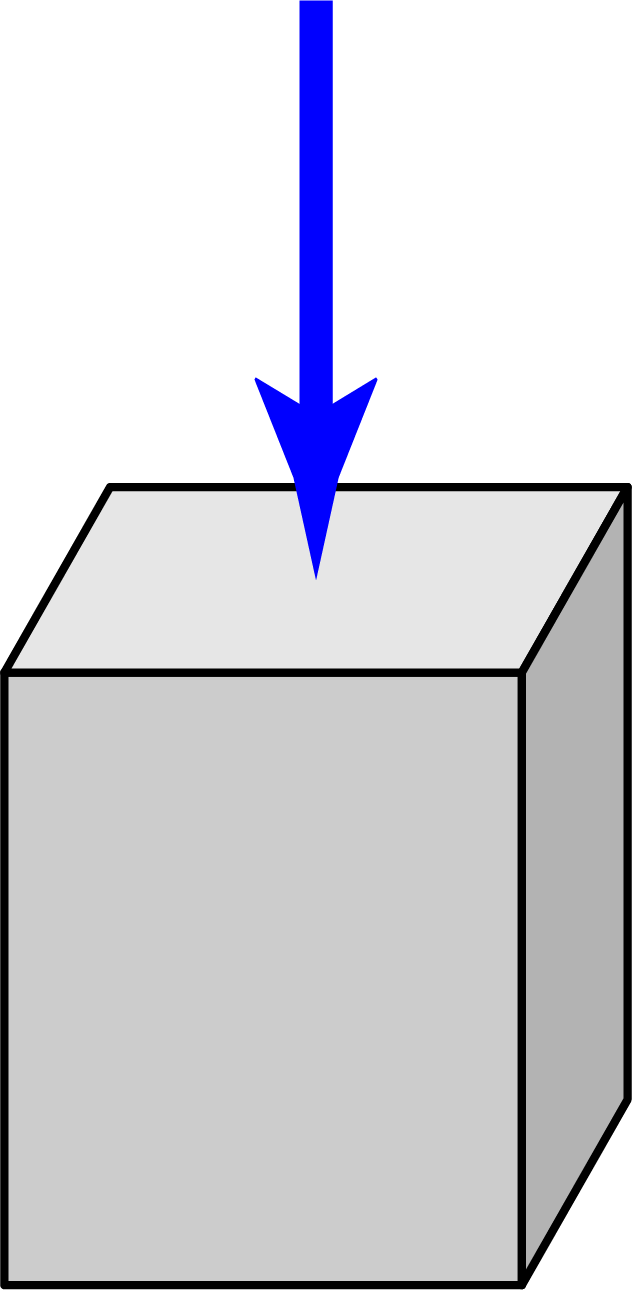}\label{fig:geometry1}}
\hspace{0.2in}
\subfloat[][]{\includegraphics[height=0.5in]{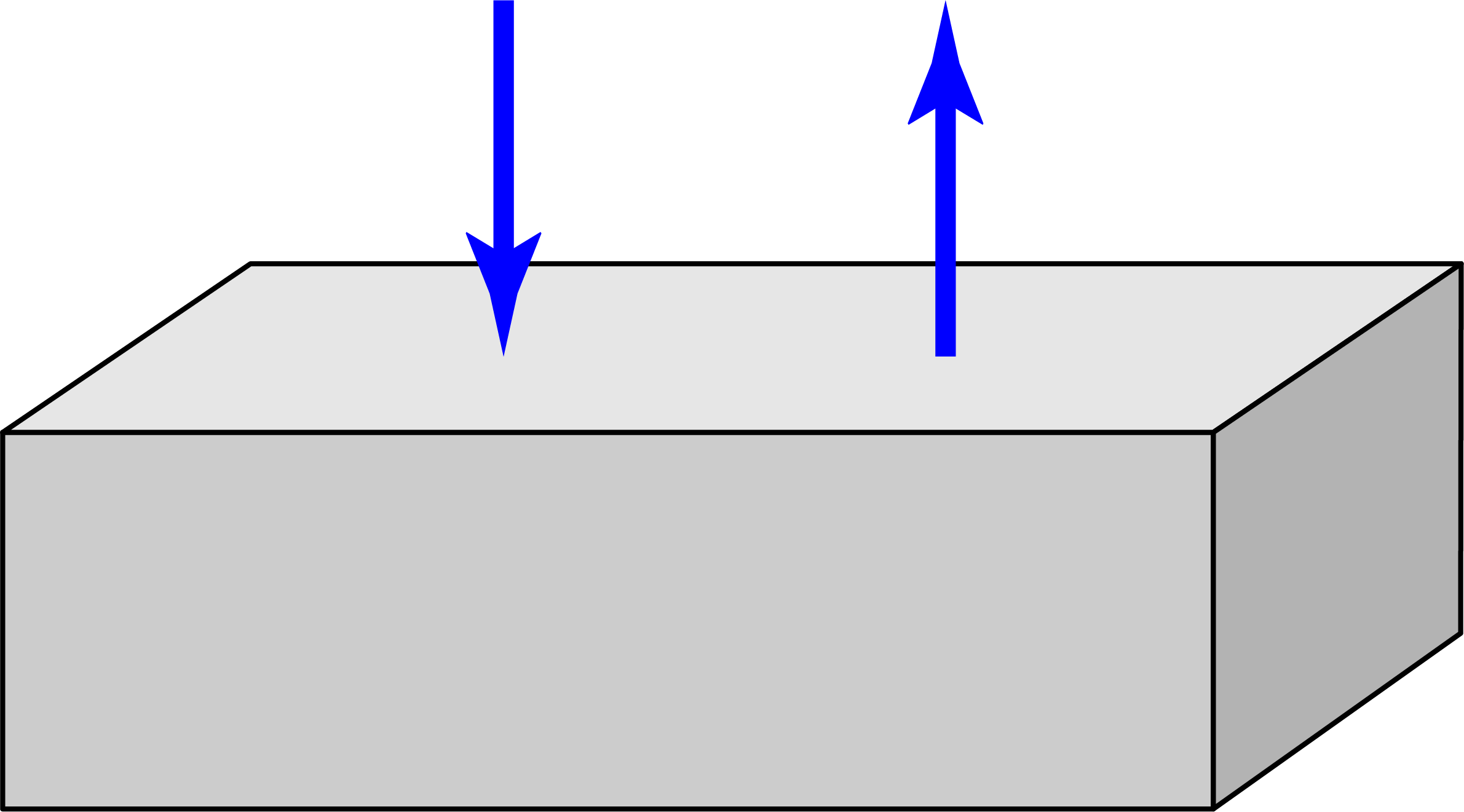}\label{fig:geometry1}}
\hspace{0.05in}
\subfloat[][]{\includegraphics[height=0.5in]{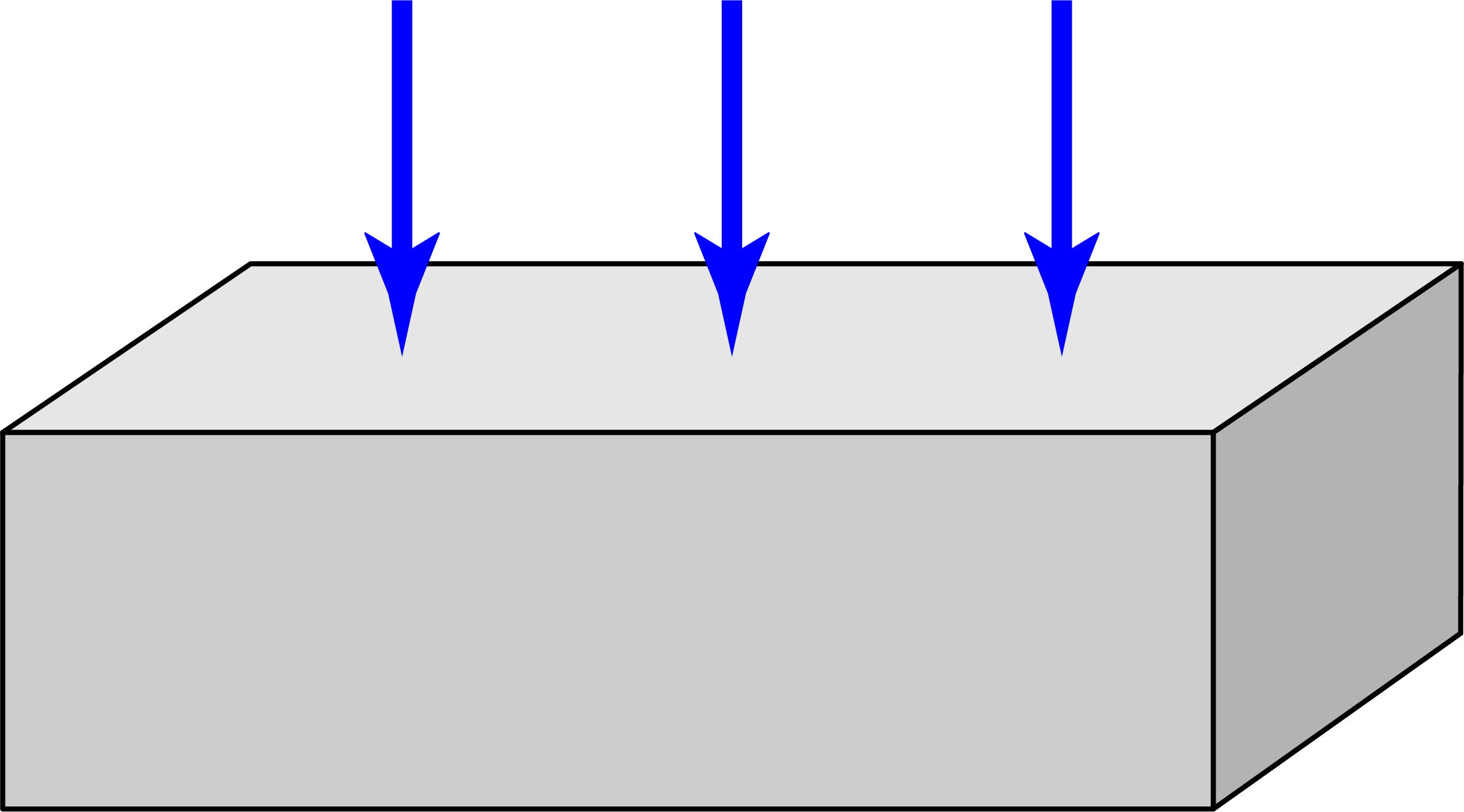}\label{fig:geometry3}}
\caption[]{Three scenarios of compression in a solid block.}
\label{fig:24_AbaqusU2}
\end{figure}

In the first scenario (Figure~\ref{fig:24_AbaqusU2} (a)), there is a compression force on the top of a cubic block. The resulting mode surfaces are displayed in Figure~\ref{fig:teaser}. Notice that the volume is dominated by negative mode surfaces (gold: indicating compression-dominant). In addition, the network of degenerate curves leads to an interesting vascular structure that is not commonly observed. This indicates that despite a simple boundary condition, the shape of the block can also play an important role in deciding where uniaxial compression can occur.

We contrast compression against extension in our second scenario (Figure~\ref{fig:24_AbaqusU2} (b)), in which one side of the block is pushed down and the other side is pulled up forming a dome-shaped dent and a dome, respectively. Our visualization (Figure~\ref{fig:07blockSince}) confirms this, as all planar degenerate curves (yellow: uniaxial compression) appear in the left half and all linear degenerate curves (green: uniaxial extension) appear in the right half of the block. In addition, we show the generalized mode $0.97$ surface. Notice the symmetry between the mode surfaces for compression (left side) and extension (right side).  While compression and extension are usually conceptualized as two different types of deformations (respectively volume loss and gain), to our knowledge the symmetry between them in the generalized mode surfaces is new to the research and application communities. In addition, Figure~\ref{fig:quality_compare_IBFV} (b) shows the major principal curvature directions for the generalized mode $0.99$ surface. Notice that in the negative mode part of this surface (gold), the major principal curvature direction is mostly aligned with the width of the block except in the middle where it is aligned with the length of the block. The sudden change in the directions is a reflection of the fact that due to the boundary condition, the middle part of this surface is being pushed out more along the block than across the block. The collision of compression directions leads to some trisectors on the surface near the middle. Due to symmetry, similar deformations of the surface (this time extension) can be observed on the positive mode part of the surface (teal). The number and location of the umbilical points (e.g. trisectors) can indicate uniform compression or extension and will be investigated further. Note that such observations are impossible without the ability to generate seamless meshes for mode surface (e.g. compare this to Figure~\ref{fig:quality_compare_IBFV} (a)).

\begin{figure}[tb]
\centering%
\includegraphics[width=2.8in]{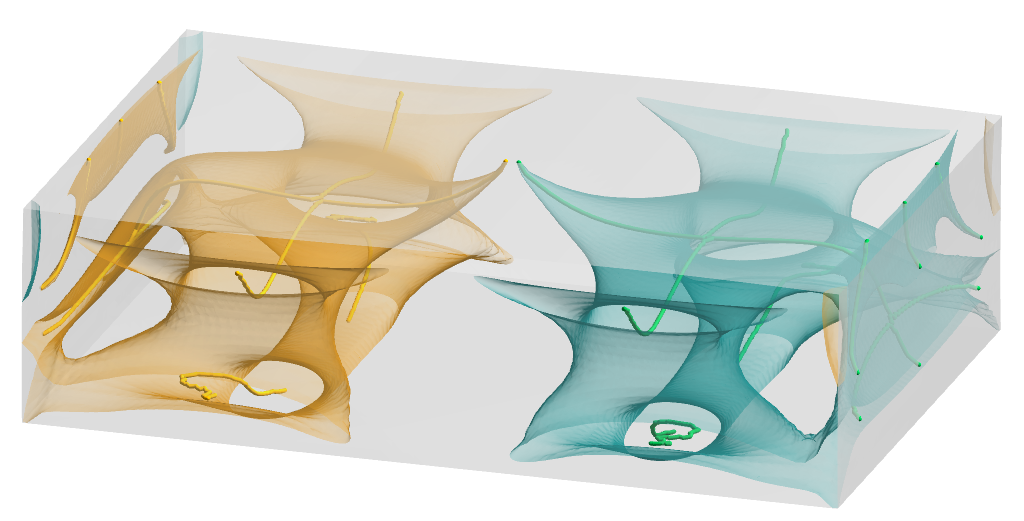}
\caption[]{Mode surfaces of stress tensor fields for a rectangle block being pushed and pulled simultaneously. The left side is compressed while the right side is extended through boundary conditions applied at the top.}
\label{fig:07blockSince}
\end{figure}

Our third example studies spatially consecutive compressive loads (Figure~\ref{fig:24_AbaqusU2} (c)) which can be seen on highway bridges where fleets of heavy trucks are parked due to traffic lights or in structures where pairs of nuts and bolts are installed at a number of evenly spaced locations. Intuitively, we can see that these compressive ``holding-down'' points create a wave form in the underlying geometry. In our example where we have three consecutive compressive loads, we note that through analyzing the mode surfaces of the stress tensor field (Figure~\ref{fig:26_frontAndBackView} (a)), this wave shape can be observed at mode value $-0.53$ (gold).  As the mode value moves towards zero, the wave shape subsides (Figure~\ref{fig:26_frontAndBackView} (b)). After this mode value, only the periodic behavior of the regions between the compressive loads remains. Between each pair of compressive loads, there is a tubular region indicating that the material between the loads can be in both extension and compression at the same time.

\begin{figure}[!tb]
  \centering
\subfloat[][$\mu=\pm 0.53$]{\includegraphics[width=2.8in]
{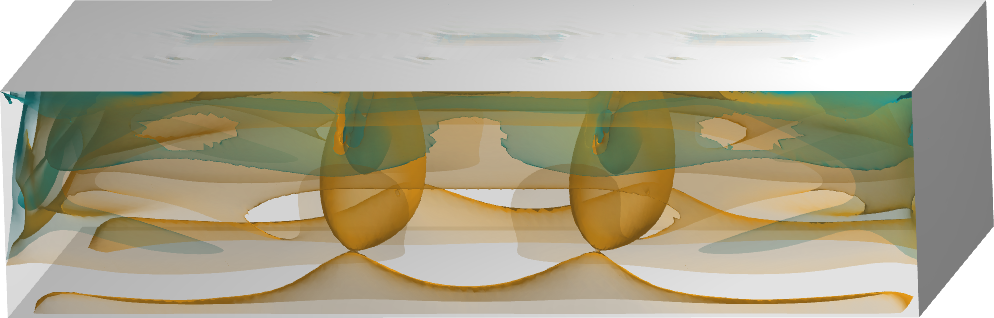}}\\
\subfloat[][$\mu=\pm 0.50$]{\includegraphics[width=2.8in]
{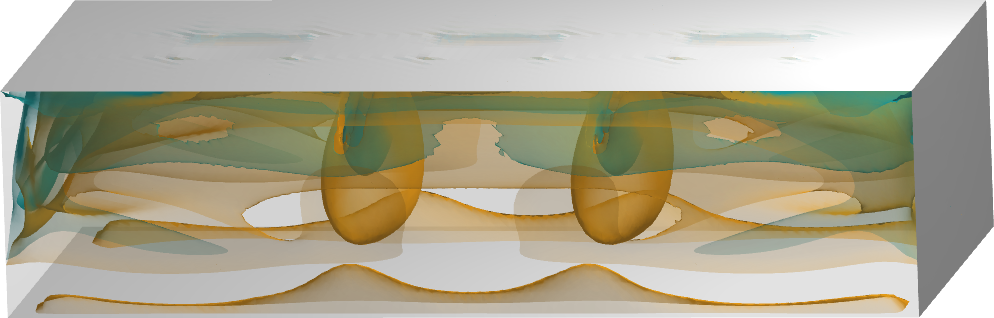}}
\caption{Mode surfaces of stress tensor fields of a rectangular block being pushed down at three locations.    }\label{fig:26_frontAndBackView}
\end{figure}

\section{Conclusion and Future Work}
\label{sec:conclusion}

In this paper, we provide to our knowledge the first approach to the seamless extraction of mode surfaces (including neutral surfaces). The method is faster and of higher quality than existing methods of mode (neutral) surface extraction. Together with the degenerate curve extraction algorithm of Roy et.~\cite{Roy:18}, our approach represents the first unified approach for the seamless extraction of mode surfaces of any mode. At the core of the approach is our novel topological analysis of mode surfaces (including degenerate curves and neutral surfaces) using the same medium eigenvector manifold.

In addition, we apply our technique to a number of data sets to focus on analyzing stress tensor fields for compressive behavior that is fundamental in solid mechanics.  We present a new demonstration of compression in volumes to indicate different behavior from different regions of the material.

Our current sampling strategy for the medium eigenvector manifold does not guarantee best quality of the triangle mesh such as the aspect ratios of the triangles. We plan to explore other sampling patterns to improve on this.

There are other types of feature surfaces in 3D symmetric tensor fields, such as extremal surfaces~\cite{Zobel2017} as well as magnitude surfaces and anisotropy index surfaces~\cite{Palacios:16}. Adapting our parameterization approach to those surfaces is also a future research avenue.

Extending tensor field analysis and visualization to more complex physical behaviors that are hybrids of different kinds of deformation can also be fruitful future research directions. Furthermore, correlating patterns observed in the geometry of mode surfaces to the load configurations has the potential of increasing our understanding and thus control in applications such as tire performance and bridge maintenance. We plan to explore these avenues in our future research.

\acknowledgments{
We wish to thank our anonymous reviewers for their valuable suggestions. We thank Kyle Hiebel for making the voice recording of our video. This research is partially supported by NSF awards (\# 1566236) and (\# 1619383).}

\bibliographystyle{abbrv-doi}

\bibliography{ms}

\begin{thebibliography}{10}

\bibitem{Alefeld1995}
G.~E. Alefeld, F.~A. Potra, and Y.~Shi.
\newblock Algorithm 748: Enclosing zeros of continuous functions.
\newblock {\em ACM Trans. Math. Softw.}, 21(3):327--344, Sept. 1995. doi: {{%
10\hspace{.1pt}\discretionary{.}{%
}{.}\hspace{.4pt}1145\discretionary{/}{%
}{/}210089\hspace{.1pt}\discretionary{.}{%
}{.}\hspace{.4pt}210111}}


\bibitem{armstrong1979basic}
M.~Armstrong.
\newblock {\em Basic topology}.
\newblock McGraw-Hill Book Co., 1979.

\bibitem{BOISSONNAT2005405}
J.-D. Boissonnat and S.~Oudot.
\newblock Provably good sampling and meshing of surfaces.
\newblock {\em Graphical Models}, 67(5):405 -- 451, 2005.
\newblock Solid Modeling and Applications. doi: {{%
10\hspace{.1pt}\discretionary{.}{%
}{.}\hspace{.4pt}1016\discretionary{/}{%
}{/}j\hspace{.1pt}\discretionary{.}{%
}{.}\hspace{.4pt}gmod\hspace{.1pt}\discretionary{.}{%
}{.}\hspace{.4pt}2005\hspace{.1pt}\discretionary{.}{%
}{.}\hspace{.4pt}01\hspace{.1pt}\discretionary{.}{%
}{.}\hspace{.4pt}004}}


\bibitem{EPFL-BOOK-138668}
L.~Cammoun, C.~A. Castano-Moraga, E.~Munoz-Moreno, D.~Sosa-Cabrera, B.~Acar,
  M.~Rodriguez-Florido, A.~Brun, H.~Knutsson, J.~Thiran, S.~Aja-Fernandez,
  R.~de~Luis~Garcia, D.~Tao, and X.~Li.
\newblock {\em Tensors in {I}mage {P}rocessing and {C}omputer {V}ision}.
\newblock Advances in Pattern Recognition. Springer London, London, 2009.

\bibitem{chorin1990mathematical}
A.~J. Chorin, J.~E. Marsden, and J.~E. Marsden.
\newblock {\em A mathematical introduction to fluid mechanics}, vol.~3.
\newblock Springer, 1990.

\bibitem{CRISCIONE:00}
J.~C. Criscione, J.~D. Humphrey, A.~S. Douglas, and W.~C. Hunter.
\newblock An invariant basis for natural strain which yields orthogonal stress
  response terms in isotropic hyperelasticity.
\newblock {\em Journal of the Mechanics and Physics of Solids}, 48(12):2445 --
  2465, 2000. doi: {{%
10\hspace{.1pt}\discretionary{.}{%
}{.}\hspace{.4pt}1016\discretionary{/}{%
}{/}S0022\discretionary{%
}{-}{-}5096\discretionary{%
}{(}{(}00\discretionary{)}{%
}{)}00023\discretionary{%
}{-}{-}5}}


\bibitem{damon1998generic}
J.~Damon.
\newblock Generic structure of two-dimensional images under gaussian blurring.
\newblock {\em SIAM Journal on Applied Mathematics}, 59(1):97--138, 1998.

\bibitem{article}
B.~De-Araújo, D.~Lopes, P.~Jepp, J.~Jorge, and B.~Wyvill.
\newblock A survey on implicit surface polygonization.
\newblock {\em ACM Computing Surveys}, 47:1--39, 05 2015. doi: {{%
10\hspace{.1pt}\discretionary{.}{%
}{.}\hspace{.4pt}1145\discretionary{/}{%
}{/}2732197}}


\bibitem{delmarcelle:visualizing}
T.~Delmarcelle and L.~Hesselink.
\newblock {Visualizing second-order tensor fields with hyperstream lines}.
\newblock {\em IEEE Computer Graphics and Applications}, 13(4):25--33, July
  1993.

\bibitem{Fulton:74}
W.~Fulton.
\newblock {\em Algebraic Curves}.
\newblock Mathematics Lecture Note Series. W.A. Benjamin, 1974.

\bibitem{greaves2011poisson}
G.~N. Greaves, A.~Greer, R.~S. Lakes, and T.~Rouxel.
\newblock Poisson's ratio and modern materials.
\newblock {\em Nature materials}, 10(11):823--837, 2011.

\bibitem{Greene:14}
J.~Greene.
\newblock "traces of matrix products.
\newblock {\em Electronic Journal of Linear Algebra.}, 27, 2018. doi: {{%
10\hspace{.1pt}\discretionary{.}{%
}{.}\hspace{.4pt}13001\discretionary{/}{%
}{/}1081\discretionary{%
}{-}{-}3810\hspace{.1pt}\discretionary{.}{%
}{.}\hspace{.4pt}1999}}


\bibitem{Grosso:17}
R.~Grosso.
\newblock An asymptotic decider for robust and topologically correct
  triangulation of isosurfaces: Topologically correct isosurfaces.
\newblock In {\em Proceedings of the Computer Graphics International
  Conference}, CGI '17. Association for Computing Machinery, New York, NY, USA,
  2017. doi: {{%
10\hspace{.1pt}\discretionary{.}{%
}{.}\hspace{.4pt}1145\discretionary{/}{%
}{/}3095140\hspace{.1pt}\discretionary{.}{%
}{.}\hspace{.4pt}3095179}}


\bibitem{eigenweb}
G.~Guennebaud, B.~Jacob, et~al.
\newblock Eigen v3.
\newblock http://eigen.tuxfamily.org, 2010.

\bibitem{hesselink:topology}
L.~Hesselink, Y.~Levy, and Y.~Lavin.
\newblock {The topology of symmetric, second-order 3{D} tensor fields}.
\newblock {\em IEEE Transactions on Visualization and Computer Graphics},
  3(1):1--11, Mar. 1997.

\bibitem{irving2003integers}
R.~S. Irving.
\newblock {\em Integers, polynomials, and rings: a course in algebra}.
\newblock Springer Science \& Business Media, 2003.

\bibitem{Johnson:04}
C.~Johnson and C.~Hansen.
\newblock {\em Visualization Handbook}.
\newblock Academic Press, Inc., USA, 2004.

\bibitem{Khan:20}
F.~{Khan}, L.~{Roy}, E.~{Zhang}, B.~{Qu}, S.~{Hung}, H.~{Yeh}, R.~S. {Laramee},
  and Y.~{Zhang}.
\newblock Multi-scale topological analysis of asymmetric tensor fields on
  surfaces.
\newblock {\em IEEE Transactions on Visualization and Computer Graphics},
  26(1):270--279, 2020.

\bibitem{Kratz:13}
A.~Kratz, C.~Auer, M.~Stommel, and I.~Hotz.
\newblock Visualization and analysis of second-order tensors: Moving beyond the
  symmetric positive-definite case.
\newblock {\em Computer Graphics Forum}, 32(1):49--74, 2013. doi: {{%
10\hspace{.1pt}\discretionary{.}{%
}{.}\hspace{.4pt}1111\discretionary{/}{%
}{/}j\hspace{.1pt}\discretionary{.}{%
}{.}\hspace{.4pt}1467\discretionary{%
}{-}{-}8659\hspace{.1pt}\discretionary{.}{%
}{.}\hspace{.4pt}2012\hspace{.1pt}\discretionary{.}{%
}{.}\hspace{.4pt}03231\hspace{.1pt}\discretionary{.}{%
}{.}\hspace{.4pt}x}}


\bibitem{Lorensen:87}
W.~E. Lorensen and H.~E. Cline.
\newblock Marching cubes: A high resolution 3d surface construction algorithm.
\newblock In {\em Proceedings of the 14th Annual Conference on Computer
  Graphics and Interactive Techniques}, SIGGRAPH '87, p. 163–169. Association
  for Computing Machinery, New York, NY, USA, 1987. doi: {{%
10\hspace{.1pt}\discretionary{.}{%
}{.}\hspace{.4pt}1145\discretionary{/}{%
}{/}37401\hspace{.1pt}\discretionary{.}{%
}{.}\hspace{.4pt}37422}}


\bibitem{luk:2009}
C.~Luk and S.~Mann.
\newblock Tessellating algebraic curves and surfaces using a-patches.
\newblock In {\em GRAPP}, pp. 82--89, 2009.

\bibitem{matsumoto2002introduction}
Y.~Matsumoto.
\newblock {\em An introduction to Morse theory}, vol. 208.
\newblock American Mathematical Soc., 2002.

\bibitem{nickalls2006viete}
R.~Nickalls.
\newblock Viete, descartes and the cubic equation.
\newblock {\em The Mathematical Gazette}, 90(518):203--208, 2006.

\bibitem{Nielson:04}
G.~M. Nielson.
\newblock Dual marching cubes.
\newblock In {\em Proceedings of the Conference on Visualization '04}, VIS '04,
  p. 489–496. IEEE Computer Society, USA, 2004. doi: {{%
10\hspace{.1pt}\discretionary{.}{%
}{.}\hspace{.4pt}1109\discretionary{/}{%
}{/}VISUAL\hspace{.1pt}\discretionary{.}{%
}{.}\hspace{.4pt}2004\hspace{.1pt}\discretionary{.}{%
}{.}\hspace{.4pt}28}}


\bibitem{Paiva06robustadaptive}
A.~Paiva, T.~Lewiner, Luiz, and H.~D. Figueiredo.
\newblock Robust adaptive meshes for implicit surfaces.
\newblock In {\em Computer Graphics and Image Processing, Brazilian Symposium
  on 0}, pp. 205--212, 2006.

\bibitem{Palacios:17}
J.~Palacios, L.~Roy, P.~Kumar, C.~Hsu, W.~Chen, C.~Ma, L.~Wei, and E.~Zhang.
\newblock Tensor field design in volumes.
\newblock {\em {ACM} Trans. Graph.}, 36(6):188:1--188:15, 2017. doi: {{%
10\hspace{.1pt}\discretionary{.}{%
}{.}\hspace{.4pt}1145\discretionary{/}{%
}{/}3130800\hspace{.1pt}\discretionary{.}{%
}{.}\hspace{.4pt}3130844}}


\bibitem{Palacios:16}
J.~Palacios, H.~Yeh, W.~Wang, Y.~Zhang, R.~S. Laramee, R.~Sharma, T.~Schultz,
  and E.~Zhang.
\newblock Feature surfaces in symmetric tensor fields based on eigenvalue
  manifold.
\newblock {\em IEEE Transactions on Visualization and Computer Graphics},
  22(3):1248--1260, Mar. 2016. doi: {{%
10\hspace{.1pt}\discretionary{.}{%
}{.}\hspace{.4pt}1109\discretionary{/}{%
}{/}TVCG\hspace{.1pt}\discretionary{.}{%
}{.}\hspace{.4pt}2015\hspace{.1pt}\discretionary{.}{%
}{.}\hspace{.4pt}2484343}}


\bibitem{Palacios:11}
J.~Palacios and E.~Zhang.
\newblock Interactive visualization of rotational symmetry fields on surfaces.
\newblock {\em {IEEE} Trans. Vis. Comput. Graph.}, 17(7):947--955, 2011. doi:
  {{%
10\hspace{.1pt}\discretionary{.}{%
}{.}\hspace{.4pt}1109\discretionary{/}{%
}{/}TVCG\hspace{.1pt}\discretionary{.}{%
}{.}\hspace{.4pt}2010\hspace{.1pt}\discretionary{.}{%
}{.}\hspace{.4pt}121}}


\bibitem{Chen:11}
D.~Palke, Z.~Lin, G.~Chen, H.~Yeh, P.~Vincent, R.~Laramee, and E.~Zhang.
\newblock Asymmetric tensor field visualization for surfaces.
\newblock {\em IEEE Transactions on Visualization and Computer Graphics},
  17(12):1979--1988, Dec. 2011. doi: {{%
10\hspace{.1pt}\discretionary{.}{%
}{.}\hspace{.4pt}1109\discretionary{/}{%
}{/}TVCG\hspace{.1pt}\discretionary{.}{%
}{.}\hspace{.4pt}2011\hspace{.1pt}\discretionary{.}{%
}{.}\hspace{.4pt}170}}


\bibitem{raith2019tensor}
F.~Raith, C.~Blecha, T.~Nagel, F.~Parisio, O.~Kolditz, F.~G{\"u}nther,
  M.~Stommel, and G.~Scheuermann.
\newblock Tensor field visualization using fiber surfaces of invariant space.
\newblock {\em IEEE transactions on visualization and computer graphics},
  25(1):1122--1131, 2019.

\bibitem{Renbo:2005}
X.~Renbo, L.~Weijun, and Y.~Wang.
\newblock A robust and topological correct marching cube algorithm without
  look-up table.
\newblock pp. 565 -- 569, 10 2005. doi: {{%
10\hspace{.1pt}\discretionary{.}{%
}{.}\hspace{.4pt}1109\discretionary{/}{%
}{/}CIT\hspace{.1pt}\discretionary{.}{%
}{.}\hspace{.4pt}2005\hspace{.1pt}\discretionary{.}{%
}{.}\hspace{.4pt}44}}


\bibitem{Roy:18}
L.~Roy, P.~Kumar, Y.~Zhang, and E.~Zhang.
\newblock Robust and fast extraction of 3d symmetric tensor field topology.
\newblock {\em {IEEE} Trans. Vis. Comput. Graph.}, 25(1):1102--1111, 2019. doi:
  {{%
10\hspace{.1pt}\discretionary{.}{%
}{.}\hspace{.4pt}1109\discretionary{/}{%
}{/}TVCG\hspace{.1pt}\discretionary{.}{%
}{.}\hspace{.4pt}2018\hspace{.1pt}\discretionary{.}{%
}{.}\hspace{.4pt}2864768}}


\bibitem{spence2000elementary}
L.~E. Spence, A.~J. Insel, and S.~H. Friedberg.
\newblock {\em Elementary linear algebra}.
\newblock Prentice Hall, 2000.

\bibitem{Stander:97}
B.~T. Stander and J.~C. Hart.
\newblock Guaranteeing the topology of an implicit surface polygonization for
  interactive modeling.
\newblock In {\em Proceedings of the 24th Annual Conference on Computer
  Graphics and Interactive Techniques}, SIGGRAPH '97, p. 279–286. ACM
  Press/Addison-Wesley Publishing Co., USA, 1997. doi: {{%
10\hspace{.1pt}\discretionary{.}{%
}{.}\hspace{.4pt}1145\discretionary{/}{%
}{/}258734\hspace{.1pt}\discretionary{.}{%
}{.}\hspace{.4pt}258868}}


\bibitem{Tricoche:08}
X.~Tricoche, G.~Kindlmann, and C.-F. Westin.
\newblock Invariant crease lines for topological and structural analysis of
  tensor fields.
\newblock {\em IEEE Transactions on Visualization and Computer Graphics},
  14(6):1627--1634, 2008. doi: {{%
10\hspace{.1pt}\discretionary{.}{%
}{.}\hspace{.4pt}1109\discretionary{/}{%
}{/}TVCG\hspace{.1pt}\discretionary{.}{%
}{.}\hspace{.4pt}2008\hspace{.1pt}\discretionary{.}{%
}{.}\hspace{.4pt}148}}


\bibitem{Zhang:09}
E.~Zhang, H.~Yeh, Z.~Lin, and R.~S. Laramee.
\newblock {A}symmetric tensor analysis for flow visualization.
\newblock {\em IEEE Transactions on Visualization and Computer Graphics},
  15(1):106--122, 2009.

\bibitem{Zheng:04}
X.~Zheng and A.~Pang.
\newblock Topological lines in 3d tensor fields.
\newblock In {\em Proceedings IEEE Visualization 2004}, VIS '04, pp. 313--320.
  IEEE Computer Society, Washington, DC, USA, 2004. doi: {{%
10\hspace{.1pt}\discretionary{.}{%
}{.}\hspace{.4pt}1109\discretionary{/}{%
}{/}VISUAL\hspace{.1pt}\discretionary{.}{%
}{.}\hspace{.4pt}2004\hspace{.1pt}\discretionary{.}{%
}{.}\hspace{.4pt}105}}


\bibitem{Zheng:05c}
X.~Zheng and A.~Pang.
\newblock 2{D} asymmetric tensor analysis.
\newblock {\em IEEE Proceedings on Visualization}, pp. 3--10, Oct 2005.

\bibitem{Zheng:05a}
X.~Zheng, B.~N. Parlett, and A.~Pang.
\newblock Topological lines in 3d tensor fields and discriminant hessian
  factorization.
\newblock {\em IEEE Transactions on Visualization and Computer Graphics},
  11(4):395--407, July 2005.

\bibitem{Zobel2017}
V.~Zobel and G.~Scheuermann.
\newblock Extremal curves and surfaces in symmetric tensor fields.
\newblock {\em The Visual Computer}, Oct 2017. doi: {{%
10\hspace{.1pt}\discretionary{.}{%
}{.}\hspace{.4pt}1007\discretionary{/}{%
}{/}s00371\discretionary{%
}{-}{-}017\discretionary{%
}{-}{-}1450\discretionary{%
}{-}{-}1}}


\end{thebibliography}

\appendix
\normalsize
\clearpage

\section{Theoretical Results and Proofs}
\label{sec:proofs}
\setcounter{theorem}{0}

\begin{theorem}
Given a 3D linear tensor field $T(x, y, z)=T_0+xT_x+yT_y+zT_z$, a mode value $0 < \mu < 1$, and a unit vector $v_2$, the number of points on the generalized mode $\mu$ surface with $\pm v_2$ as its medium eigenvector is the same as the number of real eigenvalues of the 2D asymmetric tensor $A=R_{\frac{\theta}{2}+\frac{\pi}{4}}\overline{T}'R_{\frac{\theta}{2}-\frac{\pi}{4}}$ where $\overline{T}'$ is the projection of characteristic tensor $\overline{T}$ onto the plane $P$ with normal $v_2$, $\theta = \arcsin(\sqrt{3}\tan(\frac{1}{3}\arcsin(\mu)))$, and $R_{\phi}=\begin{pmatrix}\cos\phi & -\sin\phi \\ \sin\phi & \quad \cos\phi\end{pmatrix}$. The real-valued eigenvectors of $A$ give rise to the dominant eigenvectors of the corresponding points in the generalized mode $\mu$ surface.
\label{thm:medium_eigenvector_manifold}
\end{theorem}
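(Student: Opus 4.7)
The plan is to parameterize the candidate tensors by a single angle $\phi$, convert both the ``mode $\pm\mu$'' condition and the ``lies in the image of the field'' condition into a single sinusoidal equation in $\phi$, and then match the solvability of that equation to the condition that $A$ has real eigenvalues.

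Fix an orthonormal basis $\{e_1, e_2, v_2\}$ of $\mathbb{R}^3$. Any 3D traceless symmetric tensor with $v_2$ as an eigenvector has the block-diagonal form $T|_P \oplus \lambda_2$ with $\lambda_2 = -\mathrm{tr}(T|_P)$. Imposing that $v_2$ be the medium eigenvector and $T$ have mode $\pm\mu$ fixes the eigenvalue triple $\lambda_1, \lambda_2, \lambda_3$ up to overall scaling via Equation~\ref{eq:eigenvalue_and_mode}, so up to scale every candidate has the form $\hat T(\phi) = \lambda_1 R_\phi e_1(R_\phi e_1)^T + \lambda_3 R_\phi e_2(R_\phi e_2)^T + \lambda_2 v_2 v_2^T$. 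The requirement that some nonzero multiple $s\hat T(\phi)$ lie in the affine image of the field reduces, generically, to the single linear condition $\langle \hat T(\phi), \overline T\rangle = 0$: this places the ray $\{s\hat T(\phi)\}$ inside the 4D hyperplane orthogonal to $\overline T$, which contains the 3D affine image and therefore meets it in a unique point whose sign of $s$ selects the $+\mu$ or $-\mu$ branch. Expanding this inner product using $\langle vv^T, \overline T\rangle = v^T \overline T v$, and writing the restriction $\overline T'$ of $\overline T$ to $P$ in terms of its trace $-v_2^T \overline T v_2$ and its deviatoric part of magnitude $r$ at principal angle $\alpha$, the condition collapses after a short computation to
\[
  r(\lambda_1 - \lambda_3)\cos(2\alpha - 2\phi) \;=\; -\tfrac{3\lambda_2}{2}\, v_2^T \overline T v_2.
\]

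The heart of the proof is then a direct computation of the spectrum of $A$. Using that $\det$ is multiplicative and $\mathrm{tr}$ is cyclic, together with the rotation-reflection identity $R_{\gamma_1} F_\beta R_{\gamma_2} = F_{\beta + (\gamma_1 - \gamma_2)/2}$ applied to the deviatoric part of $\overline T'$, one obtains $\mathrm{tr}(A) = \cos\theta\,\mathrm{tr}(\overline T')$ and $\det(A) = \det(\overline T')$, from which $\mathrm{tr}(A)^2 - 4\det(A)$ simplifies to $4r^2 - \sin^2\theta\,(v_2^T \overline T v_2)^2$. This discriminant is non-negative exactly when the sinusoidal equation above is solvable, provided the identity $\sin\theta = -3\lambda_2/(\lambda_1 - \lambda_3)$ holds; writing $\psi = \tfrac{1}{3}\arcsin(-\mu)$, Equation~\ref{eq:eigenvalue_and_mode} gives $\lambda_2 = \sqrt{2/3}\sin\psi$ and $\lambda_1 - \lambda_3 = \sqrt{2}\cos\psi$, so $3\lambda_2/(\lambda_1 - \lambda_3) = \sqrt{3}\tan\psi = -\sqrt{3}\tan(\tfrac{1}{3}\arcsin\mu)$, matching the theorem's definition of $\theta$. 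The same sinusoidal equation arises from the real-eigenvector condition $v \times Av = 0$ for $A$, so the real eigenvector directions of $A$ coincide with the dominant eigenvector directions of the corresponding points on the generalized mode $\mu$ surface.

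The main obstacle is the sign and angle bookkeeping. Each of the two mode branches produces two raw solutions in $\phi$, but the families $\hat T_+$ and $\hat T_-$ are related by $\hat T_-(\phi) = -\hat T_+(\phi - \pi/2)$, so the four solutions collapse to two distinct tensors in the image; which mode a given tensor realizes is decided by the sign of $s$, and is what partitions the medium eigenvector manifold into the cyan, magenta, and blue regions discussed in the paper. Verifying this collapse carefully, and checking that the dominant eigenvector direction is invariant under the sign flip (negating a tensor exchanges its major and minor eigenvectors but leaves the direction farthest from the repeating pair unchanged), is what aligns the count of two real eigenvalues of $A$ with the count of two points on the generalized mode $\mu$ surface.
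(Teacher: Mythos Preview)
Your approach is correct and genuinely different from the paper's. You parameterize the candidate tensors by the in-plane angle $\phi$ of the dominant eigenvector and reduce both the field-membership condition $\langle \hat T(\phi),\overline T\rangle=0$ and the real-eigenvector condition $v_3'^T A v_1'=0$ to the \emph{same} sinusoidal equation $r\cos(2\alpha-2\phi)=\tfrac{1}{2}\sin\theta\,v_2^T\overline T v_2$, then match its solvability to the sign of the discriminant $\trace(A)^2-4\det(A)=4r^2-\sin^2\theta\,(v_2^T\overline T v_2)^2$. The paper instead introduces auxiliary unit vectors $u=kv_1'+lv_3'$ and $w=kv_1'-lv_3'$ with $k=\sqrt{(\lambda_1-\lambda_2)/(\lambda_1-\lambda_3)}$, $l=\sqrt{(\lambda_2-\lambda_3)/(\lambda_1-\lambda_3)}$, shows that the field condition becomes simply $w^T\overline T'u=0$ while $u\cdot w=\sin\theta$, and then observes that because $v_1'$ bisects $u$ and $w$, the rotations $R_{\theta/2\pm\pi/4}$ in the definition of $A$ are exactly what send $v_1'$ to $u$ and $w$, yielding $v_3'\perp Av_1'$ directly. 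Your route is more elementary and makes the real/complex boundary criterion used in Theorem~\ref{thm:complex_boundary_equation} drop out immediately; the paper's substitution is less computational and explains \emph{why} the specific product $R_{\theta/2+\pi/4}\overline T'R_{\theta/2-\pi/4}$ is the natural object. The paper also handles the $v_2\leftrightarrow -v_2$ swap separately via the adjugate identity $R_{-\pi/2}A^TR_{\pi/2}=\adj(A)$, whereas you fold it into the sign-of-$s$ bookkeeping by noting that the dominant-eigenvector direction is invariant under negation; that is a cleaner way to handle it.
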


\begin{proof}

As pointed out in~\cite{Roy:18}, given a tensor $t=\lambda_1 v_1 v_1^T + \lambda_2 v_2 v_2^T + \lambda_3 v_3 v_3^T$ in the linear tensor field $T(x, y, z)=T_0+xT_x+yT_y+zT_z$, where $v_1$, $v_2$, and $v_3$ are respectively the major, medium, and minor eigenvectors, we have

\begin{equation}
v_1^T \overline{T} v_1 + v_2^T \overline{T} v_2 + v_3^T \overline{T} v_3 = 0. \label{eq:T_bar_trace}
\end{equation}

In addition, since $\overline{T}$ is the characteristic tensor of $T(x, y, z)$,

\begin{align}
	0 &= \left\langle \overline{T}, \lambda_1 v_1 v_1^T + \lambda_2 v_2 v_2^T + \lambda_3 v_3 v_3^T \right\rangle \nonumber\\
	&= \lambda_1 \trace(\overline{T} v_1 v_1^T) + \lambda_2 \trace(\overline{T} v_2 v_2^T) + \lambda_3 \trace(\overline{T} v_3 v_3^T) \nonumber\\
	&= \lambda_1 v_1^T \overline{T} v_1 + \lambda_2 v_2^T \overline{T} v_2 + \lambda_3 v_3^T \overline{T} v_3.
\end{align}

\noindent according to the {\em cyclic property of trace}~\cite{Greene:14}.

Combining this equation with Equation~\ref{eq:T_bar_trace}, we have
\begin{align}
	0 &= (\lambda_1 - \lambda_2) v_1^T \overline{T} v_1 - (\lambda_2 - \lambda_3) v_3^T \overline{T} v_3 .
	\label{eqn:v1v2v3_in_field}
\end{align}

Notice that the major eigenvector $v_1$ and minor eigenvector $v_3$ must be inside $P$, the plane that contains the point where $t$ occurs in the field and whose normal is $v_2$. Let $v_1'$ and $v_3'$ be $v_1$ and $v_3$ expressed in the coordinate system of $P$. Consequently, Equation~\ref{eqn:v1v2v3_in_field} can be rewritten as the following:
\begin{align}
	0 &= (\lambda_1 - \lambda_2) v_1'^T \overline{T}' v_1' - (\lambda_2 - \lambda_3) v_3'^T\overline{T}' v_3'.
	\label{eqn:v1v2v3_prime_in_field}
\end{align}

We first consider the case when $\mu(t)>0$ and use the right-handed coordinate system where $v_3$ is the horizontal axis and $v_1$ is the vertical axis.

For simplification purposes, we define $u = k v_1' + l v_3'$ and $w = k v_1' - l v_3'$ where $k = \sqrt{\frac{\lambda_1-\lambda_2}{\lambda_1-\lambda_3}}$ and $l = \sqrt{\frac{\lambda_2-\lambda_3}{\lambda_1-\lambda_3}}$. Therefore, $v_1' = \frac{u + w}{2 k}$ and $v_3' = \frac{u - w}{2 l}$.

It is straightforward to verify that $u$ and $w$ both have unit length.
Moreover, since both $v_1'$ and $v_3'$ are unit vectors and have the same length, it can be verified that

\begin{equation}
u\cdot w = k^2 - l^2 = \frac{\lambda_1 +\lambda_3 -2\lambda_2}{\lambda_1 - \lambda_3}.
\end{equation}

\noindent Since $\lambda_1+\lambda_2+\lambda_3=0$, we have

\begin{equation}
\lambda_1 +\lambda_3 -2\lambda_2 = -3\lambda_2  = \sqrt{6} \sin (\frac{1}{3}\arcsin(\mu)).
\end{equation}

\noindent Similarly, it can be shown that

\begin{equation}
\lambda_1-\lambda_3=\sqrt{2-3\lambda_2^2}=\sqrt{2}\cos(\frac{1}{3}\arcsin(\mu)).
\end{equation}
\noindent Consequently,
\begin{equation}
u \cdot w= \frac{\lambda_1 +\lambda_3 -2\lambda_2}{\lambda_1 - \lambda_3} =  \sqrt{3}\tan(\frac{1}{3}\arcsin(\mu)) = \sin\theta.
\end{equation}

Notice that since $k \ge 0$ and $l \ge 0$, $w$ is always counterclockwise of $u$. Therefore, $w$ is obtained by rotating $u$ counterclockwise by $\frac{\pi}{2} - \theta$. Note that Equation~\ref{eqn:v1v2v3_prime_in_field} can be rewritten in terms of $u$ and $w$ as $w^T \overline{T}' u = 0$, or equivalently $w \cdot (\overline{T}' u) = 0$. That is, $w \bot \overline{T}' u$.

As $v_1'$ is a bisector of $u$ and $w$, $v_1'$ can be obtained by either rotating $u$ counterclockwise by $\frac{\pi}{4}-\frac{\theta}{2}$ or rotating $w$ counterclockwise by $\frac{\theta}{2}-\frac{\pi}{4}$. That is, $u=R_{\frac{\theta}{2}-\frac{\pi}{4}}v_1'$ and $w=R_{\frac{\pi}{4}-\frac{\theta}{2}} v_1'$. Therefore,
\begin{equation}
R_{\frac{\pi}{4}-\frac{\theta}{2}} v_1' \perp \overline{T}' R_{\frac{\theta}{2}-\frac{\pi}{4}}v_1'.
\end{equation}

Since $v_3' \perp v_1'$, we have $v_1' = R_{-\frac{\pi}{2}}v_3'$. Consequently, $R_{\frac{\pi}{4}-\frac{\theta}{2}} R_{-\frac{\pi}{2}} v_3' \perp \overline{T}' R_{\frac{\theta}{2}-\frac{\pi}{4}}v_1'$, which is equivalent to $R_{-\frac{\pi}{4}-\frac{\theta}{2}} v_3' \perp \overline{T}' R_{\frac{\theta}{2}-\frac{\pi}{4}}v_1'$. Rotating both sides conterclockwise by $\frac{\pi}{4}+\frac{\theta}{2}$, we have

\begin{equation}
v_3' \perp R_{\frac{\pi}{4}+\frac{\theta}{2}} \overline{T}' R_{\frac{\theta}{2}-\frac{\pi}{4}}v_1'.
\end{equation}

Recall that $A=R_{\frac{\theta}{2}+\frac{\pi}{4}}\overline{T}'R_{\frac{\theta}{2}-\frac{\pi}{4}}$. Then $v_3' \perp A v_1'$.

This means that $A v_1'$ must be orthogonal to a vector that is orthogonal to $v_1'$. As $v_1'$ is in two dimensions, this implies that $A v_1'$ is a scalar multiple of $v_1'$. Consequently, $v_1'$ is an eigenvector of $A$.

In the case where $\mu(t)<0$, we use the right-handed coordinate system of $P$ such that $v_1$ is now the horizontal axis and $v_3$ is the vertical axis.

We again define $u = k v_1' + l v_3'$ but negate $w$, i.e.\ $w = -k v_1' + l v_3'$. We can still show that $u$ and $w$ are unit vectors and $u\cdot w=\sqrt{3}\tan(\frac{1}{3}\arcsin(\mu))$.

From here, we follow the same argument for the case where $\mu(t)>0$ except that we need to reverse the roles of $v_1'$ and $v_3'$. This leads to that $v_3'$ is an eigenvector of $A=R_{\frac{\theta}{2}+\frac{\pi}{4}}\overline{T}'R_{\frac{\theta}{2}-\frac{\pi}{4}}$.

We now consider a number of cases. First, when $A$ has zero real eigenvalues, its eigenvectors must be complex-valued. Since the eigenvectors of a 3D symmetric tensor cannot be complex-valued, there are no points on the generalized mode $\mu$ surface with $v_2$ as its medium eigenvectors.

When $A$ has only one real eigenvalue, there is only one point in the generalized mode $\mu$ surface whose medium eigenvector is $v_2$. Depending on the sign of the mode value of this point, either its major eigenvector $v_1$ ($\mu(t)>0$) or minor eigenvector $v_3$ ($\mu(t)<0$) is given by the corresponding eigenvector of $A$. Recall that the dominant eigenvector is the major eigenvector $v_1$ when $\mu(t)=\mu>0$ and the minor eigenvector $v_3$ when $\mu(t)=-\mu<0$. Therefore, the eigenvector of $A$ gives the dominant eigenvector of the point in the generalized mode $\mu$ surface.

To see the uniqueness, we note that if there is another point in the generalized mode $\mu$ surface with the same combination of $v_1$, $v_2$, $v_3$ and $\mu(t)$, then the tensor must be a multiple of $t$. However, as pointed out by Roy et al.~\cite{Roy:18}, if a tensor $t$ appears in a 3D linear tensor field, then none of its multiples can appear in the same field. Consequently, there is only one point in the generalized mode $\mu$ surface with $v_2$ as its medium eigenvector.

On the other hand, when $A$ has two real eigenvalues, the major eigenvector and the minor eigenvector of $A$ {\em each} corresponds to a point in the generalized mode $\mu$ surface whose medium eigenvectors are given by $v_2$. If the point has a positive mode value, its major eigenvector $v_1$ is given by the corresponding eigenvector of $A$. In contrast, if the point has a negative mode value, its minor eigenvector $v_3$ is given by the corresponding eigenvector of $A$.

What remains to be shown is that using $-v_2$ as the medium eigenvector, we arrive at the same set of points as using $v_2$. Again, we have the two cases $\mu(t)>0$ and $\mu(t)<0$.

For the first case, note that the rotations in the definition of $A$ are around $v_2$, and they are reversed when $-v_2$ is chosen as the medium eigenvector. Assume that $A$'s eigenvalues are $\hat{\lambda}_1$ and $\hat{\lambda}_2$, and that $v_1'$ is the eigenvector corresponding to $\hat{\lambda}_1$. Then we find the eigenvalue on the opposite side of the sphere with

\begin{align}
	R_{-(\frac{\theta}{2}+\frac{\pi}{4})}\overline{T}'R_{-(\frac{\theta}{2}-\frac{\pi}{4})} v_1'
	&= A^T v_1' \nonumber\\
	&= R_{\frac{\pi}{2}} (R_{-\frac{\pi}{2}} A^T R_{\frac{\pi}{2}}) R_{-\frac{\pi}{2}} v_1'  \nonumber\\
	&= R_{\frac{\pi}{2}} adj(A) R_{-\frac{\pi}{2}} v_1' \nonumber\\
	&= R_{\frac{\pi}{2}} adj(A) v_3' \nonumber\\
	&= R_{\frac{\pi}{2}}  \hat{\lambda}_2 v_3' \nonumber\\
	&= \hat{\lambda}_2 R_{\frac{\pi}{2}} v_3' \nonumber\\
	&= \hat{\lambda}_2 v_1'
\end{align}
where we have used a property of asymmetric $2$x$2$ tensors~\cite{spence2000elementary} that if $A = \begin{bmatrix} a & b \\ c & d \end{bmatrix}$ is a 2D asymmetric tensor and $R_\theta$ is the two-dimensional counterclockwise rotation matrix of angle $\theta$,
\begin{align}
	R_{-\frac{\pi}{2}} A^T R_{\frac{\pi}{2}} = \begin{bmatrix} \quad d & -b \\ -c & \quad a \end{bmatrix} = \adj(A)
	\label{eqn:rotation_adjugate}
\end{align}
is the adjugate of $A$, whose eigenvectors are the same as $A$'s and the roles of whose major and minor eigenvalues are swapped. Consequently, the point on the generalized mode $\mu$ surface with $v_1$ given by the major eigenvector of $A(v_2)$ is the same point whose $v_1$ is given by the minor eigenvector of $A(-v_2)$.

A similar argument applies when $\mu(t)<0$.

To summarize, $v_2$ and $-v_2$ correspond to the same set of points in the generalized mode $\mu$ surface. Moreover, when there are two points with $v_2$ and $-v_2$ as medium eigenvectors, each of $A(v_2)$ and $A(-v_2)$ corresponds to exactly {\em one} point in the generalized mode $\mu$ surface whose dominant eigenvector (as a 3D symmetric tensor) is given by the major eigenvector of $A$ (as an asymmetric tensor).
\end{proof}

\begin{theorem}
Given a 3D linear tensor field $T(x, y, z)=T_0+xT_x+yT_y+zT_z$ and a mode value $\mu$, the real domain in the medium eigenvector manifold corresponding to $\mu$ is characterized by $\frac{1}{2} - v_2^T \overline{T}'^2 v_2 + \frac{1}{4} \cos^2\theta(v_2^T \overline{T}' v_2)^2 \ge 0$ where $\theta = \arcsin(\sqrt{3}\tan(\frac{1}{3}\arcsin(\mu)))$. The boundary between the real and complex domain occurs if and only if the equal sign holds. \label{thm:complex_boundary_equation}
\end{theorem}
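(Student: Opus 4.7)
My plan is to start from the characterization given by Theorem~\ref{thm:medium_eigenvector_manifold}: $v_2$ lies in the real domain if and only if the $2\times 2$ asymmetric tensor $A(v_2)=R_{\theta/2+\pi/4}\,\overline{T}'\,R_{\theta/2-\pi/4}$ has real eigenvalues. For a $2\times 2$ matrix this is equivalent to the discriminant inequality $(\trace A)^2-4\det A\ge 0$, with equality characterizing the boundary of the real domain (a repeated real eigenvalue of $A$). So the whole task reduces to re-expressing $\trace A$ and $\det A$ in the rotationally invariant scalars $v_2^T\overline{T}v_2$ and $v_2^T\overline{T}^2 v_2$.

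First I would handle the two invariants of $A$ directly. By the cyclic property of trace and $R_{\theta/2-\pi/4}R_{\theta/2+\pi/4}=R_\theta$, $\trace A=\trace(\overline{T}'R_\theta)$. Choosing any orthonormal basis $\{e_1,e_2\}$ of the plane $P$ with normal $v_2$ and writing $\overline{T}'=\begin{pmatrix}p&q\\ q&r\end{pmatrix}$ with $p=e_1^T\overline{T}e_1$, $r=e_2^T\overline{T}e_2$, $q=e_1^T\overline{T}e_2$, a one-line multiplication gives $\trace(\overline{T}'R_\theta)=(p+r)\cos\theta$. Since $\overline{T}$ is traceless, extending the basis to $\{e_1,e_2,v_2\}$ of $\mathbb{R}^3$ yields $p+r=-v_2^T\overline{T}v_2$, hence $(\trace A)^2=(v_2^T\overline{T}v_2)^2\cos^2\theta$. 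The determinant is even easier: $\det R_{\theta/2\pm\pi/4}=1$, so $\det A=\det\overline{T}'=pr-q^2$.

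The heart of the proof, and the step I expect to require the most care, is turning the coordinate-dependent quantity $pr-q^2$ into an intrinsic expression in $\overline{T}$ and $v_2$. In the basis $\{e_1,e_2,v_2\}$ I introduce the remaining entries $a=e_1^T\overline{T}v_2$, $b=e_2^T\overline{T}v_2$, $s=v_2^T\overline{T}v_2$. Then $v_2^T\overline{T}^2 v_2=a^2+b^2+s^2$, and since $\overline{T}$ is symmetric with $\|\overline{T}\|=1$, $\trace(\overline{T}^2)=p^2+r^2+s^2+2(q^2+a^2+b^2)=1$. Combining this with $p+r=-s$ (so $p^2+r^2=s^2-2pr$) and collecting terms gives the clean identity
\begin{equation}
pr-q^2 \;=\; v_2^T\overline{T}^2 v_2-\tfrac12. \nonumber
\end{equation}

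Substituting both expressions into $(\trace A)^2-4\det A\ge 0$ yields $(v_2^T\overline{T}v_2)^2\cos^2\theta-4\,v_2^T\overline{T}^2 v_2+2\ge 0$, which after dividing by $4$ is exactly the stated inequality. Equality corresponds to a zero discriminant, i.e.\ a repeated real eigenvalue of $A(v_2)$, which by Theorem~\ref{thm:medium_eigenvector_manifold} is the transition between two real eigenvalues and a complex-conjugate pair, i.e.\ the complex domain boundary. This completes the proof plan.
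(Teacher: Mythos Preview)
Your proof is correct and reaches the same inequality as the paper, but the organization is genuinely different. The paper works through the asymmetric-tensor decomposition $A=\gamma_d\mathbb{I}+\gamma_r R_{\pi/2}+\gamma_s S_\tau$ from~\cite{Zhang:09}, computing $\gamma_d$ and $\gamma_r$ separately via $\trace(R_\theta\overline{T}')$ split into its $\cos\theta$ and $\sin\theta$ parts, then extracting $\gamma_s^2$ from $\|A\|^2=2(\gamma_d^2+\gamma_r^2+\gamma_s^2)$ and a projection-matrix calculation of $\trace(\overline{T}'^2)=1-2v_2^T\overline{T}^2v_2+(v_2^T\overline{T}v_2)^2$; the real-domain condition is then $\gamma_s^2\ge\gamma_r^2$. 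You instead go straight to the characteristic-polynomial discriminant $(\trace A)^2-4\det A\ge0$, using $\det R_\phi=1$ to get $\det A=\det\overline{T}'=pr-q^2$ and a short coordinate computation in the basis $\{e_1,e_2,v_2\}$ (combining $\|\overline{T}\|=1$ and $\trace\overline{T}=0$) to obtain $pr-q^2=v_2^T\overline{T}^2v_2-\tfrac12$. Since $(\trace A)^2-4\det A=4(\gamma_s^2-\gamma_r^2)$, the two arguments are equivalent; yours is more self-contained and slightly shorter, while the paper's stays within the $\gamma$-language already used in the surrounding asymmetric-tensor analysis, which pays off in the later proofs (e.g.\ the bifurcation theorem) where $\gamma_r$ and $\gamma_s$ reappear separately.
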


\begin{proof}
Let $\overline{T'}$ be the projection of the characteristic tensor $\overline{T}$ of the tensor field onto the plane perpendicular to unit medium eigenvectors $v_2$. We consider the asymmetric tensor field $A(v_2)=R_{\frac{\theta}{2}+\frac{\pi}{4}}\overline{T}'(v_2) R_{\frac{\theta}{2}-\frac{\pi}{4}}$.

As mentioned in~\cite{Zhang:09}, a $2 \times 2$ asymmetric matrix $A$ can be uniquely decomposed as follows:

\begin{equation}
	A = \gamma_d \begin{bmatrix} 1 & 0 \\ 0 & 1 \end{bmatrix}
	+ \gamma_r \begin{bmatrix} 0 & -1 \\ 1 & \quad  0 \end{bmatrix}
	+ \gamma_s \begin{bmatrix} \cos \tau & \quad \sin \tau \\ \sin \tau & -\cos \tau \end{bmatrix}
\end{equation}

\noindent where $\gamma_d$, $\gamma_r$, and $\gamma_r$ are the isotropic, rotational, and anisotropic components, respectively. Note that $\tau$ gives rise to the eigenvector information of $A$. Depending on the discriminant of $\Delta=\gamma_s^2-\gamma_r^2$, $A$ has either two real eigenvalues ($\Delta>0$) or two complex-valued eigenvalues ($\Delta<0$). When $\Delta=0$, $A$ has a pair of repeating eigenvalues. For our asymmetric tensor field, this implies that $v_2$ is on the complex domain boundary in the medium eigenvector manifold. Notice that this condition is both necessary and sufficient. To compute $\Delta$, we need to compute both $\gamma_d$ and $\gamma_r$.

	\begin{align}
		\gamma_d
		&= \frac{1}{2} \trace(A) \nonumber\\
        &= \frac{1}{2} \trace(R_{\frac{\theta}{2}+\frac{\pi}{4}}\overline{T}'R_{\frac{\theta}{2}-\frac{\pi}{4}}) \nonumber\\
        &= \frac{1}{2} \trace(R_{\frac{\theta}{2}-\frac{\pi}{4}}R_{\frac{\theta}{2}+\frac{\pi}{4}}\overline{T}')    \nonumber   \\
		&= \frac{1}{2} \trace(R_\theta \overline{T}') \nonumber\\
		&= \frac{1}{2} \trace(\cos\theta \,\mathbb{I}\, \overline{T}') + \frac{1}{2} \trace\left(\sin\theta \begin{bmatrix} 0 & -1 \\ 1 & \quad 0 \end{bmatrix} \overline{T}'\right) \nonumber\\
		&= \cos\theta \frac{\trace(\overline{T}')}{2}
		\label{eqn:scale_A_trace}
	\end{align}
	where the second term vanishes because $\overline{T'}$ is symmetric.

Since trace is the same in any basis, we have $$\trace(\overline{T}') = v_1'^T \overline{T}' v_1' + v_3'^T \overline{T}' v_3'.$$ Recall that $\overline{T}'$, $v_1'$, and $v_3'$ are respectively the projection of $\overline{T}$, $v_1$, and $v_3$ onto the plane perpendicular to $v_2$. Thus,

\begin{equation}
v_1'^T \overline{T}' v_1' + v_3'^T \overline{T}' v_3' = v_1^T \overline{T} v_1 + v_3^T \overline{T} v_3 = -v_2^T \overline{T}v_2
\end{equation}

\noindent in which the second equality above comes from rearranging \autoref{eq:T_bar_trace}.

\noindent Consequently,

\begin{equation}
\gamma_d = -\cos\theta\frac{v_2^T \overline{T}v_2}{2}.
\end{equation}

A similar calculation shows that

\begin{equation}
  \gamma_r = \sin\theta \frac{\trace(\overline{T}')}{2} = -\sin\theta \frac{v_2^T \overline{T}v_2}{2}.
\end{equation}

Notice that $\gamma_d^2+\gamma_r^2=(\frac{v_2^T \overline{T} v_2}{2})^2$. To compute $\gamma_s$ it is convenient to find the squared magnitude of $A$, which is

\begin{equation}\trace(A^T A) = \trace((R_{\frac{\theta}{2}+\frac{\pi}{4}}\overline{T}'R_{\frac{\theta}{2}-\frac{\pi}{4}})^T R_{\frac{\theta}{2}+\frac{\pi}{4}}\overline{T}'R_{\frac{\theta}{2}-\frac{\pi}{4}}) = \trace(\overline{T}'^2).
\end{equation}

Since $v_2$ has unit length, the matrix $\mathbb{I} - v_2 v_2^T$ is a projection matrix and is thus equal to its square. Using this to project the tensor $\overline{T}$, we have
	\begin{align}
		\trace(\overline{T}'^2)
		&= \trace((\mathbb{I} - v_2 v_2^T) \overline{T} (\mathbb{I} - v_2 v_2^T)^2 \overline{T} (\mathbb{I} - v_2 v_2^T)) \nonumber\\
		&= \trace( \overline{T} (\mathbb{I} - v_2 v_2^T)^2 \overline{T} (\mathbb{I} - v_2 v_2^T)^2) \nonumber\\
		&= \trace(\overline{T} (\mathbb{I} - v_2 v_2^T) \overline{T} (\mathbb{I} - v_2 v_2^T)) \nonumber\\
		&= \trace(\overline{T}^2) - \trace(\overline{T} v_2 v_2^T\overline{T}) \nonumber\\
		&- \trace(\overline{T}^2 v_2 v_2^T) + \trace(\overline{T}v_2 v_2^T\overline{T}v_2 v_2^T) \nonumber\\
		&= 1 - 2 v_2^T \overline{T}^2 v_2 + (v_2^T \overline{T} v_2)^2.
		\label{eqn:magnitude_A}
	\end{align}
	Because $A$'s magnitude is $2 (\gamma_r^2 + \gamma_s^2 + \gamma_d^2)$, we can use this to find $\gamma_s$.
	\begin{align}
		\gamma_s^2
		&= \frac{1}{2} \trace(A^T A) - \gamma_d^2 - \gamma_r^2 \nonumber\\
		&= \frac{1}{2} - v_2^T \overline{T}^2 v_2 + \frac{1}{2} (v_2^T \overline{T} v_2)^2 - \left( \frac{v_2^T \overline{T} v_2}{2} \right)^2 \nonumber\\
		&= \frac{1}{2} - v_2^T \overline{T}^2 v_2 + \frac{1}{4} (v_2^T \overline{T} v_2)^2.
	\end{align}

The real domain is characterized by $\gamma_s^2 \ge \gamma_r^2$, i.e.\

\begin{equation}
\frac{1}{2} - v_2^T \overline{T}^2 v_2 + \frac{1}{4} (v_2^T \overline{T} v_2)^2 \ge (-\sin\theta \frac{v_2^T \overline{T}v_2}{2})^2
\end{equation}

or equivalently, $\frac{1}{2} - v_2^T \overline{T}^2 v_2 + \frac{1}{4} \cos^2\theta(v_2^T \overline{T} v_2)^2 \ge 0$. The complex domain boundary is thus

\begin{equation}
\frac{1}{2} - v_2^T \overline{T}^2 v_2 + \frac{1}{4} \cos^2\theta(v_2^T \overline{T} v_2)^2 = 0 \label{eq:real_boundary}
\end{equation}

\end{proof}

\begin{corollary}
Given a 3D linear tensor field $T(x, y, z)=T_0+xT_x+yT_y+zT_z$ and two mode values $\mu_1$ and $\mu_2$ such that $0\le \mu_1 < \mu_2\le 1$, the complex domain on the medium eigenvector manifold corresponding to $\mu_1$ is a proper subset of that corresponding to $\mu_2$. \label{cor:domain_decreasing}
\end{corollary}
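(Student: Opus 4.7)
The plan is to leverage the explicit inequality characterization of the complex domain from Theorem~\ref{thm:complex_boundary_equation}. Writing $f_\mu(v_2) = \tfrac{1}{2} - v_2^T \overline{T}^2 v_2 + \tfrac{1}{4}\cos^2\theta(\mu)\,(v_2^T \overline{T} v_2)^2$, the complex domain for mode value $\mu$ is precisely $\{v_2 : f_\mu(v_2) < 0\}$. The only place $\mu$ enters is through the factor $\cos^2\theta(\mu)$, so the entire corollary reduces to the monotonic behavior of this single scalar.

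First I would verify that $\theta(\mu) = \arcsin\bigl(\sqrt{3}\tan(\tfrac{1}{3}\arcsin(\mu))\bigr)$ is strictly increasing on $[0,1]$, by observing that it is a composition of strictly increasing functions whose ranges stay in the principal branches (as $\mu$ sweeps $[0,1]$, the inner $\arcsin$ maps to $[0,\pi/2]$, the division by $3$ keeps the argument of $\tan$ in $[0,\pi/6]$, and $\sqrt{3}\tan(\cdot)$ maps $[0,\pi/6]$ onto $[0,1]$, on which the outer $\arcsin$ is well-defined and strictly increasing). Hence $\theta$ is a bijection $[0,1]\to[0,\pi/2]$, which yields $\cos^2\theta(\mu_2) < \cos^2\theta(\mu_1)$ whenever $0\le\mu_1<\mu_2\le 1$.

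For the inclusion direction, I would compute directly
\begin{equation}
f_{\mu_2}(v_2) - f_{\mu_1}(v_2) = \tfrac{1}{4}\bigl(\cos^2\theta(\mu_2) - \cos^2\theta(\mu_1)\bigr)\,(v_2^T\overline{T}v_2)^2 \le 0,
\end{equation}
since $(v_2^T\overline{T}v_2)^2 \ge 0$. Therefore $f_{\mu_1}(v_2) < 0$ implies $f_{\mu_2}(v_2) \le f_{\mu_1}(v_2) < 0$, giving containment of the complex domains.

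The main obstacle is proving \emph{proper} inclusion. The difference above is strict precisely when $v_2^T \overline{T} v_2 \ne 0$, so I would argue as follows. For $\mu_1>0$, pick a boundary point $v_2^*$ of the $\mu_1$ complex domain (this boundary is nonempty by Theorem~\ref{thm:complex_boundary_equation} and the discussion of the bifurcation structure) at which $v_2^{*T}\overline{T}v_2^* \ne 0$; such a point exists because the zero set of the linear form $v_2\mapsto v_2^T\overline{T}v_2$ on the sphere is a one-dimensional curve and cannot contain all of the one-dimensional boundary of the $\mu_1$ complex domain generically (for any $\overline{T}$ whose complex-domain boundary is not identically contained in $\{v_2^T\overline{T}v_2=0\}$, which is the structurally stable case assumed throughout the paper). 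Then $f_{\mu_2}(v_2^*) < f_{\mu_1}(v_2^*)=0$ strictly, so by continuity an open neighborhood of $v_2^*$ lies in the $\mu_2$ complex domain while $v_2^*$ itself lies on the boundary of the $\mu_1$ complex domain, producing points in the set difference. The case $\mu_1 = 0$ is handled separately: as noted in the paper, the $\mu=0$ complex domain degenerates to the isolated singularities, so it is (as an open set) empty, while for any $\mu_2>0$ the $\mu_2$ complex domain is a nonempty open region (by the bifurcation analysis), again yielding strict containment.
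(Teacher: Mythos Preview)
Your proposal is correct and follows essentially the same approach as the paper: both use the inequality characterization from Theorem~\ref{thm:complex_boundary_equation} and the monotonicity of $\cos^2\theta(\mu)$ to obtain containment. You are in fact more careful than the paper in two places: you spell out the composition-of-monotone-functions argument for $\theta(\mu)$, and you give an explicit boundary-point argument for \emph{proper} inclusion, whereas the paper simply asserts ``the reverse is not true.'' One minor simplification: once you have a boundary point $v_2^*$ with $v_2^{*T}\overline{T}v_2^* \ne 0$, the point $v_2^*$ itself already witnesses strict containment (it satisfies $f_{\mu_2}(v_2^*)<0=f_{\mu_1}(v_2^*)$), so the open-neighborhood continuity step is unnecessary.
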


\begin{proof}

The complex domain is characterized by the following inequality opposite that for the real domain, i.e.,

\begin{equation}
\frac{1}{2} - v_2^T \overline{T}^2 v_2 + \frac{1}{4} \cos^2\theta(v_2^T \overline{T} v_2)^2 < 0.
\end{equation}

Recall that  $\theta = \arcsin(\sqrt{3}\tan(\frac{1}{3}\arcsin(\mu)))$ (Theorem~\ref{thm:medium_eigenvector_manifold}), which implies $\theta$ is monotonically increasing with respect to mode values $\mu$. Consider the range of $\theta$ which is $[0, \frac{\pi}{2}]$, $\cos^2\theta$ is a monotonically decreasing function with respect to $\mu$. That is, for $\mu_1<\mu_2$, their corresponding $\theta$'s satisfy $cos^2\theta_1>\cos^2\theta_2$.

Consequently, a unit vector $v_2$ that satisfies

\begin{equation}
\frac{1}{2} - v_2^T \overline{T}^2 v_2 + \frac{1}{4} \cos^2\theta_1(v_2^T \overline{T} v_2)^2 < 0
\end{equation}

\noindent must also satisfy

\begin{equation}
\frac{1}{2} - v_2^T \overline{T}^2 v_2 + \frac{1}{4} \cos^2\theta_2(v_2^T \overline{T} v_2)^2 < 0.
\end{equation}

\noindent That is, if $v_2$ is in the complex domain for $\mu_1$ then it must also reside in the complex domain $\mu_2$. The reverse is not true. Consequently, the complex domain for $\mu_1$ is a proper subset of that of $\mu_2$.

\end{proof}

\begin{lemma}
Given a 3D linear tensor field $T(x, y, z)=T_0+xT_x+yT_y+zT_z$ and a mode value $\mu$, the complex domain boundary corresponding to $\mu$ is parameterizable by $\alpha=v_2\overline{T}v_2$. \label{lemma:complex_boundary_parameterizable}
\end{lemma}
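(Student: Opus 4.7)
The plan is to work in the eigenbasis of the characteristic tensor $\overline{T}$ and reduce the claim to inverting a Vandermonde linear system in the squared coordinates of $v_2$. Let $\overline{\lambda}_1 > \overline{\lambda}_2 > \overline{\lambda}_3$ denote the eigenvalues of $\overline{T}$ (strict inequalities hold generically under the structurally stable conditions of Equation~\ref{T_bar_condition}), with associated orthonormal eigenvectors $e_1, e_2, e_3$. Writing $v_2 = a\,e_1 + b\,e_2 + c\,e_3$, the two quadratic forms that appear in the complex-domain boundary equation expand as
\begin{align*}
\alpha &= \overline{\lambda}_1\, a^2 + \overline{\lambda}_2\, b^2 + \overline{\lambda}_3\, c^2, \\
\beta  &= \overline{\lambda}_1^{\,2}\, a^2 + \overline{\lambda}_2^{\,2}\, b^2 + \overline{\lambda}_3^{\,2}\, c^2,
\end{align*}
which together with $a^2 + b^2 + c^2 = 1$ form a $3 \times 3$ linear system in the unknowns $(a^2, b^2, c^2)$.

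The first key step is to invoke Theorem~\ref{thm:complex_boundary_equation} to rewrite the complex-domain boundary condition as the parabolic relation $\beta = \tfrac{1}{2} + \tfrac{1}{4}\cos^2\theta \cdot \alpha^2$, so specifying $\alpha$ immediately fixes $\beta$. The second key step is to observe that the coefficient matrix of the linear system above is Vandermonde in $\overline{\lambda}_1, \overline{\lambda}_2, \overline{\lambda}_3$, with determinant $(\overline{\lambda}_2 - \overline{\lambda}_1)(\overline{\lambda}_3 - \overline{\lambda}_1)(\overline{\lambda}_3 - \overline{\lambda}_2) \neq 0$. Inverting the system recovers $(a^2, b^2, c^2)$ as explicit affine functions of $(\alpha, \beta)$, hence as explicit functions of $\alpha$ alone once the parabolic relation is imposed.

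The third step is branch selection: once $(a^2, b^2, c^2)$ is determined, there are eight sign choices for $(a, b, c)$, which collapse under the antipodal identification $v_2 \sim -v_2$ on the medium eigenvector manifold to exactly four equivalence classes, matching the four-fold symmetry depicted in Figure~\ref{fig:complex_para}. Within each of the four quarters the signs of $a, b, c$ are pinned, so $\alpha \mapsto v_2$ is a single-valued parameterization as claimed. The main subtlety I anticipate is characterizing the admissible range of $\alpha$ on each quarter, namely the interval on which the inverted affine expressions for $a^2, b^2, c^2$ are simultaneously non-negative; the endpoints of this interval correspond to the crossings of the boundary curve with the coordinate planes $a=0$, $b=0$, or $c=0$ seen in Figure~\ref{fig:complex_para}, and they distinguish the three nested regimes shown there. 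This boundary analysis reduces to a monotonicity check on the explicit affine formulas produced by the Vandermonde inverse, so no further estimates are required.
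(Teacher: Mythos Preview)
Your proposal is correct and follows essentially the same route as the paper: work in the eigenbasis of $\overline{T}$, express the boundary equation in the variables $\alpha=v_2^T\overline{T}v_2$ and $\beta=v_2^T\overline{T}^2v_2$ to obtain the parabola $\beta=\tfrac12+\tfrac14\cos^2\theta\,\alpha^2$, and then recover $(a^2,b^2,c^2)$ from $(1,\alpha,\beta)$ on each of the symmetry-related arcs. Your explicit identification of the $3\times3$ system as Vandermonde (hence invertible when the eigenvalues of $\overline{T}$ are distinct) is a point the paper leaves implicit, and your discussion of the admissible $\alpha$-range anticipates the endpoint analysis that the paper defers to the proof of Theorem~\ref{thm:bifucation}.
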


\begin{proof}

Let $\overline{\lambda}_1 \ge \overline{\lambda}_2 \ge \overline{\lambda}_3$ be the eigenvalues of $\overline{T}$ and $\overline{v}_1$, $\overline{v}_2$, and $\overline{v}_3$ be the corresponding eigenvectors. For convenience, we choose the coordinate system $\{\overline{v}_1, \overline{v}_2, \overline{v}_3\}$ in which $\overline{T}$ is a diagonal matrix. Let $v_2 = \begin{bmatrix} a \\ b \\ c \end{bmatrix}$ and $m = \begin{bmatrix} m_x \\ m_y \\ m_z \end{bmatrix} = \begin{bmatrix} a^2 \\ b^2 \\ c^2 \end{bmatrix}$. Equation~\ref{eq:real_boundary} implies
\begin{align}
	0
	&= 1 - 2 (m_x \overline{\lambda}_1^2 + m_y \overline{\lambda}_2^2 + m_z \overline{\lambda}_3^2) \nonumber\\
	&+ \frac{1}{2} \cos^2\theta (m_x \overline{\lambda}_1 + m_y \overline{\lambda}_2 + m_z \overline{\lambda}_3)^2.
	\label{eqn:real_complex_bound_m}
\end{align}

This equation's dependence on only $m_x=a^2$, $m_y=b^2$, and $m_z=c^2$ implies an eight-fold symmetry for the medium eigenvector manifold, and thus the complex domain boundary.
That is, if a unit vector $(a, b, c)$ is on the complex domain boundary, so is $(\pm a, \pm b, \pm c)$.

Therefore, we only need to parameterize the segment of the complex domain boundary where $a,b,c\ge 0$. The other seven segments can be parameterized in a similar fashion.

Notice that $m$ is always on the plane $m_x + m_y + m_z = \|v_2\|^2 = 1$. In the coordinate system

	\begin{align}
		\alpha &= \overline{\lambda}_1 m_x + \overline{\lambda}_2 m_y + \overline{\lambda}_3 m_z  \label{eq:complex_boundary_alpha} = v_2^T \overline{T} v_2 \\
		\beta  &= \overline{\lambda}_1^2 m_x + \overline{\lambda}_2^2 m_y + \overline{\lambda}_3^2 m_z \label{eq:complex_boundary_beta} = v_2^T \overline{T}^2 v_2
	\end{align}
	Equation~\ref{eqn:real_complex_bound_m} becomes
	\begin{align}
		0 &= 1 - 2 \beta + \frac{1}{2} \cos^2\theta \alpha^2
		\label{eqn:real_complex_bound_alpha}
	\end{align}
	which is the equation of a parabola that can be parameterized by $\alpha$. Each $\alpha$ gives one corresponding $\beta$ from Equation \ref{eqn:real_complex_bound_alpha} and thus one point on the segment of the complex domain boundary where $a,b,c\ge 0$.
\end{proof}

\begin{theorem}
Given a 3D linear tensor field $T(x, y, z)=T_0+xT_x+yT_y+zT_z$, let $\overline{\mu}$ is the mode of the characteristic tensor $\overline{T}$ and $\mu_0=\sqrt{1-\overline{\mu}^2}$. The topology of the generalized mode $\mu$ surface is a topological torus when $\mu>\mu_0$ and a topological double-torus when $\mu<\mu_0$.
\label{thm:bifucation}
\end{theorem}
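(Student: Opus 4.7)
The plan is to identify the generalized mode $\mu$ surface $M$ as a quotient of the real domain in the medium eigenvector manifold and then reduce the topological classification to counting the connected components of the complex domain. By Theorem~\ref{thm:medium_eigenvector_manifold}, together with the major-eigenvector disambiguation described in Section~\ref{sec:mode_surface}, the map from $S^2$ to $M$ is a bijection on the interior of the real domain $R = S^2\setminus C$ and collapses each antipodal pair on the complex domain boundary $\partial R$ to a single surface point. Hence $M$ is homeomorphic to $R/{\sim}$, where $\sim$ identifies antipodal points only on $\partial R$, and the whole problem reduces to understanding how $C$ sits inside $S^2$ as a function of $\mu$.

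The next step is to classify the topology of $C$ in each regime. By Corollary~\ref{cor:domain_decreasing} the complex domain shrinks monotonically as $\mu$ decreases, and by Lemma~\ref{lemma:complex_boundary_parameterizable} its boundary, in the $(\alpha,\beta)$-coordinates $\alpha = v_2^T\overline{T}v_2$, $\beta = v_2^T\overline{T}^2 v_2$, is the parabola $1 - 2\beta + \tfrac{1}{2}\cos^2\theta\,\alpha^2 = 0$. Working in the eigenbasis of $\overline{T}$ with eigenvalues $\overline{\lambda}_1 \ge \overline{\lambda}_2 \ge \overline{\lambda}_3$, the image of the unit sphere in these coordinates is the triangle with vertices $(\overline{\lambda}_i,\overline{\lambda}_i^{\,2})$, and the complex domain is the preimage of the part of this triangle lying above the parabola. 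A direct calculation using $\sum \overline{\lambda}_i = 0$, $\sum \overline{\lambda}_i^{\,2} = 1$, and $\overline{\mu} = 3\sqrt{6}\,\overline{\lambda}_1\overline{\lambda}_2\overline{\lambda}_3$ should show that the parabola becomes tangent to the chord joining $(\overline{\lambda}_1,\overline{\lambda}_1^{\,2})$ and $(\overline{\lambda}_3,\overline{\lambda}_3^{\,2})$ precisely when $\mu = \sqrt{1-\overline{\mu}^{\,2}} = \mu_0$. This tangency computation is the main obstacle I expect. Using the eight-fold symmetry of $m_i = v_{2,i}^{\,2}$ noted in the proof of Lemma~\ref{lemma:complex_boundary_parameterizable}, one then concludes that for $\mu > \mu_0$ the complex domain $C$ consists of two antipodal open disks (and $\partial R$ of two antipodal circles), whereas for $\mu < \mu_0$ it consists of four open disks forming two antipodal pairs (and $\partial R$ of four circles).

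Given the component count, the topological conclusion follows by routine bookkeeping. Removing $k$ open disks from $S^2$ yields $\chi(R) = 2-k$, and the antipodal identification merges $k$ boundary circles into $k/2$ circles; since $\chi(S^1) = 0$, this operation preserves the Euler characteristic, so $\chi(M) = 2-k$. The surface $M$ is orientable because, away from $\mu = \mu_0$, it is a smooth regular level set of the polynomial $54\det(T)^2 - \mu^2\|T\|^6$ in $\mathbb{R}^3$, and any such codimension-one submanifold inherits an orientation from its unit normal. Applying the classification of closed orientable surfaces, $\chi(M) = 0$ forces $M$ to be a torus and $\chi(M) = -2$ forces it to be a genus-two (double) torus, as claimed.
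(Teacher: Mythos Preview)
Your overall architecture is sound and, in the final step, actually cleaner than the paper's: the paper argues by explicit cut-and-paste (``remove the complex domains, glue antipodally along the boundary loops'') to identify the torus and the double-torus, whereas your Euler-characteristic bookkeeping together with orientability from the regular-level-set observation is more systematic and gets the same answer.

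The one real gap is your description of the bifurcation. You claim that the critical event at $\mu=\mu_0$ is that the parabola $1-2\beta+\tfrac{1}{2}\cos^2\theta\,\alpha^2=0$ becomes \emph{tangent} to the chord joining $(\overline{\lambda}_1,\overline{\lambda}_1^{\,2})$ and $(\overline{\lambda}_3,\overline{\lambda}_3^{\,2})$, i.e.\ to the $m_y=0$ edge. That tangency computation will not give you $\mu_0$. Substituting the line $\beta=-\overline{\lambda}_2\alpha-\overline{\lambda}_1\overline{\lambda}_3$ into the parabola and using $\tfrac{1}{2}+\overline{\lambda}_1\overline{\lambda}_3=\overline{\lambda}_2^{\,2}$ yields
\[
\tfrac{1}{4}\cos^2\theta\,\alpha^2+\overline{\lambda}_2\alpha+\overline{\lambda}_2^{\,2}=0,
\]
whose discriminant $\overline{\lambda}_2^{\,2}(1-\cos^2\theta)$ vanishes only at $\cos^2\theta=1$, i.e.\ $\mu=0$. (The same happens for the other two chords.) So for every $\mu\in(0,1)$ the parabola meets the line extending the $m_y=0$ chord transversally in two points; what changes at $\mu_0$ is whether both of those intersections lie on the \emph{segment}. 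The correct transition, as in the paper, is that the parabola passes through the vertex $(\overline{\lambda}_3,\overline{\lambda}_3^{\,2})$, the common endpoint of the $m_x=0$ and $m_y=0$ edges: before that, one intersection is on each of those two edges; after, both are on $m_y=0$ and none on $m_x=0$. Plugging the vertex into the parabola gives $\cos^2\theta=4-2/\overline{\lambda}_3^{\,2}$, and tracing this back through $\theta=\arcsin(\sqrt{3}\tan(\tfrac{1}{3}\arcsin\mu))$ yields $\mu=\sqrt{1-\overline{\mu}^{\,2}}$. Once you correct this single geometric picture, your component counts ($k=2$ versus $k=4$) and the rest of your argument go through unchanged.
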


\begin{proof}

We reuse the expressions $m_x$, $m_y$, $m_z$, $\alpha$, and $\beta$ from Lemma~\ref{lemma:complex_boundary_parameterizable}. Note that $m_x\ge 0$, $m_y\ge 0$, $m_z \ge 0$, and $m_x+m_y+m_z=1$. Therefore, points satisfying these conditions form an equilateral triangle in the plane $m_x+m_y+m_z=1$, which is illustrated in Figure~\ref{fig:complex_domain_bifurcation}.

Recall that $\overline{T}$ is a unit, traceless tensor with a non-positive determinant (Equation~\ref{T_bar_condition}). We have

\begin{eqnarray}
\label{eq:fact_1} \overline{\lambda}_1+\overline{\lambda}_2+\overline{\lambda}_3=0 \\
\label{eq:fact_2} \overline{\lambda}_1^2+\overline{\lambda}_2^2+\overline{\lambda}_3^2=1 \\
\label{eq:fact_3} \overline{\lambda}_1 \ge \overline{\lambda}_2 \ge 0 \ge \overline{\lambda}_3
\end{eqnarray}

Plugging in $\overline{\lambda}_3 = -\overline{\lambda}_1 - \overline{\lambda}_2$ into Equation~\ref{eq:fact_2}, we have $\overline{\lambda}_1^2+\overline{\lambda}_1\overline{\lambda}_2+\overline{\lambda}_2^2 = \frac{1}{2}$. This implies that $\overline{\lambda}_1^2+\overline{\lambda}_2^2 <  \frac{1}{2}$. Consequently, $\overline{\lambda}_1^2 < \frac{1}{2}$, $\overline{\lambda}_2^2 < \frac{1}{2}$, and $\overline{\lambda}_3^2 > \frac{1}{2}$.

For degenerate curves, i.e.\ $\mu=1$, $\cos\theta=0$ and Equation~\ref{eqn:real_complex_bound_alpha} reduces to

	\begin{align}
		0 &= 1 - 2 \beta,
		\label{eqn:real_complex_bound_alpha_mode_1}
	\end{align}

\noindent which further reduces to $1-2\overline{\lambda}_i^2$ at the corners of the triangle. Consequently, this function is negative at $(0, 0, 1)$ (complex domain) and positive at $(1, 0, 0)$ and $(0, 1, 0)$ (real domain). Because the function is linear, its zeroth levelset (complex domain boundary) must intersect each of the edges $m_x=0$ and $m_y=0$ exactly once. The subtriangle formed by the two intersection points and $(0, 0, 1)$ is the complex domain corresponding to $\mu=1$.

As $\mu$ decreases, the complex domain (grey regions in Figure~\ref{fig:complex_domain_bifurcation}) reduce in size (Lemma~\ref{lemma:complex_boundary_parameterizable}), and the parabola characterized by Equation~\ref{eqn:real_complex_bound_alpha} moves lower while still intersecting $m_x=0$ and $m_y=0$ at one point each (the yellow dots on the two edges in (a)). Due to the aforementioned eight-fold symmetry in the medium eigenvector manifold (Lemma~\ref{lemma:complex_boundary_parameterizable}), this segment of the parabola corresponds to eight segments that constitute the complex domain boundary in the medium eigenvector manifold. Due to the antipodal symmetry, we only consider the four segments where $c>0$. Since the original parabolic segment touches both the $m_x=0$ and $m_y=0$ edges, these four segments will be connected, forming a single loop (Figure~\ref{fig:complex_para}: the outermost loop). Note that due to the antipodal symmetry, the other four segments in the medium eigenvector manifold ($c<0$) also form a single loop. Moreover, this loop is to be identified with the loop where $c>0$. The real domain in the medium eigenvector manifold is thus the part of the sphere between these two loops. Gluing the two loops based on the antipodal symmetry results in a space without a boundary, the torus, which is homeomorphic to the generalized mode $\mu$ surface.

\begin{figure}
\subfloat[][Before bifurcation]{\includegraphics[width=1.2in]{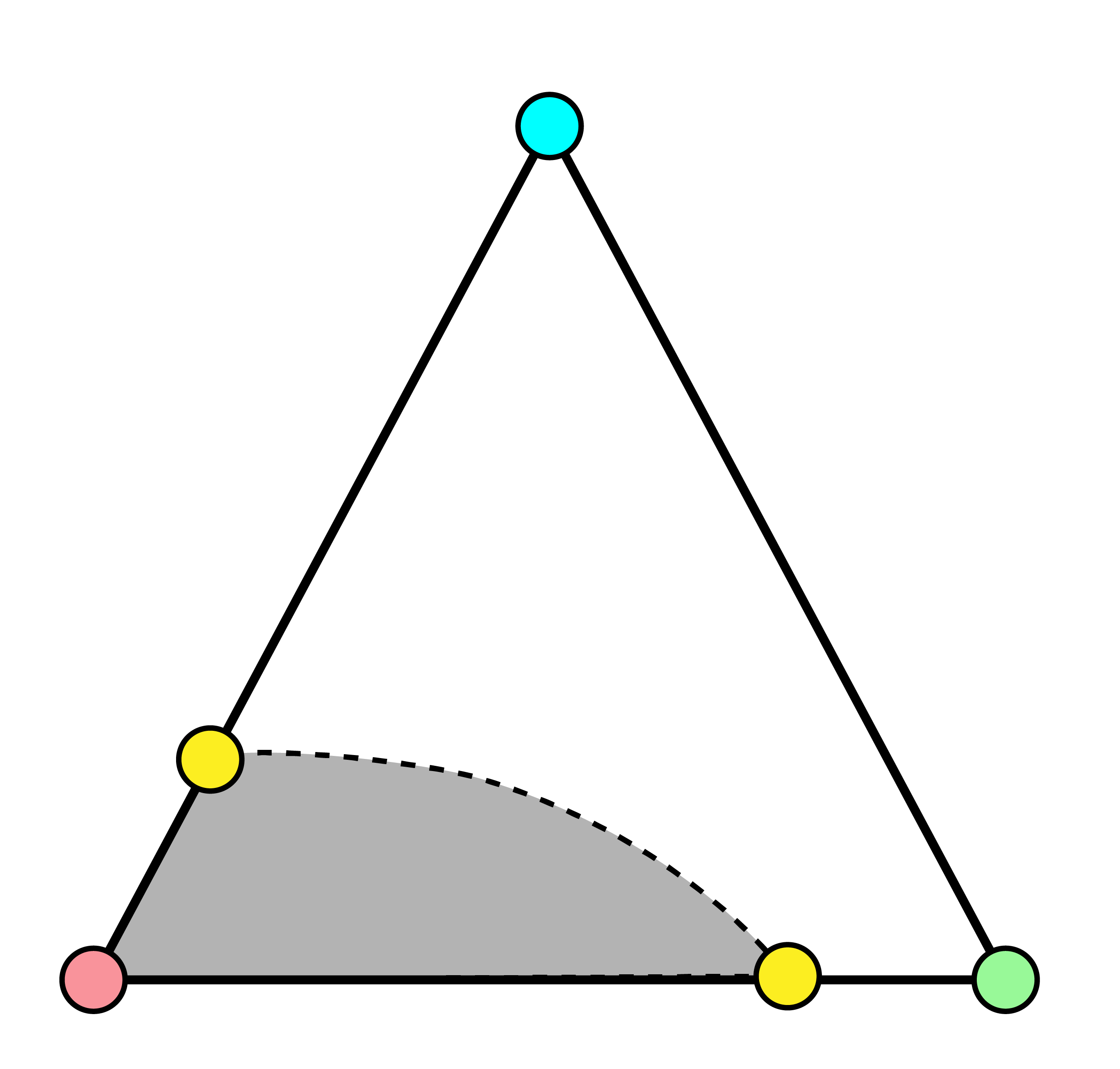}}
\subfloat[][Bifurcation]{\includegraphics[width=1.2in]{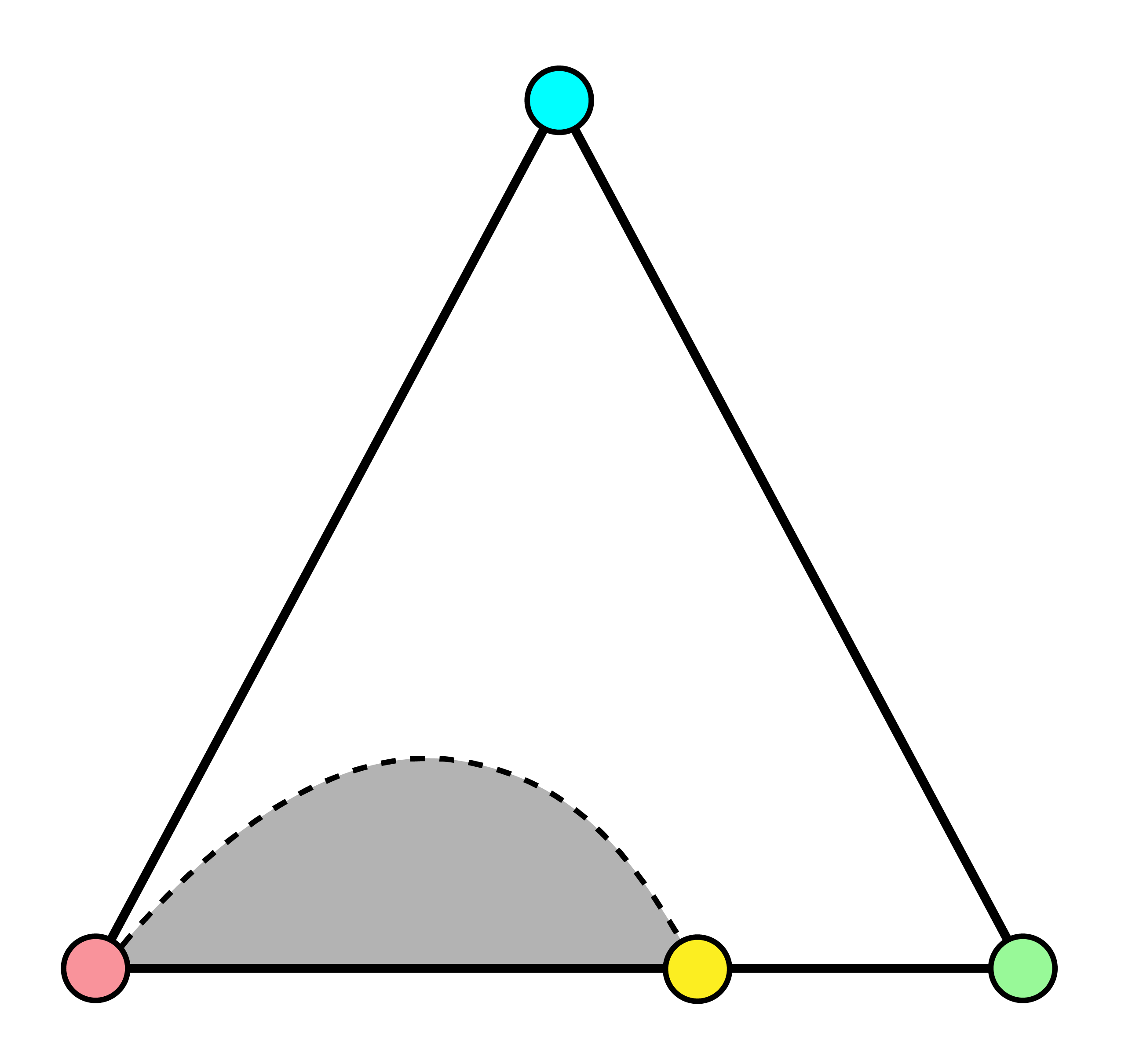}}
\subfloat[][After bifurcation]{\includegraphics[width=1.2in]{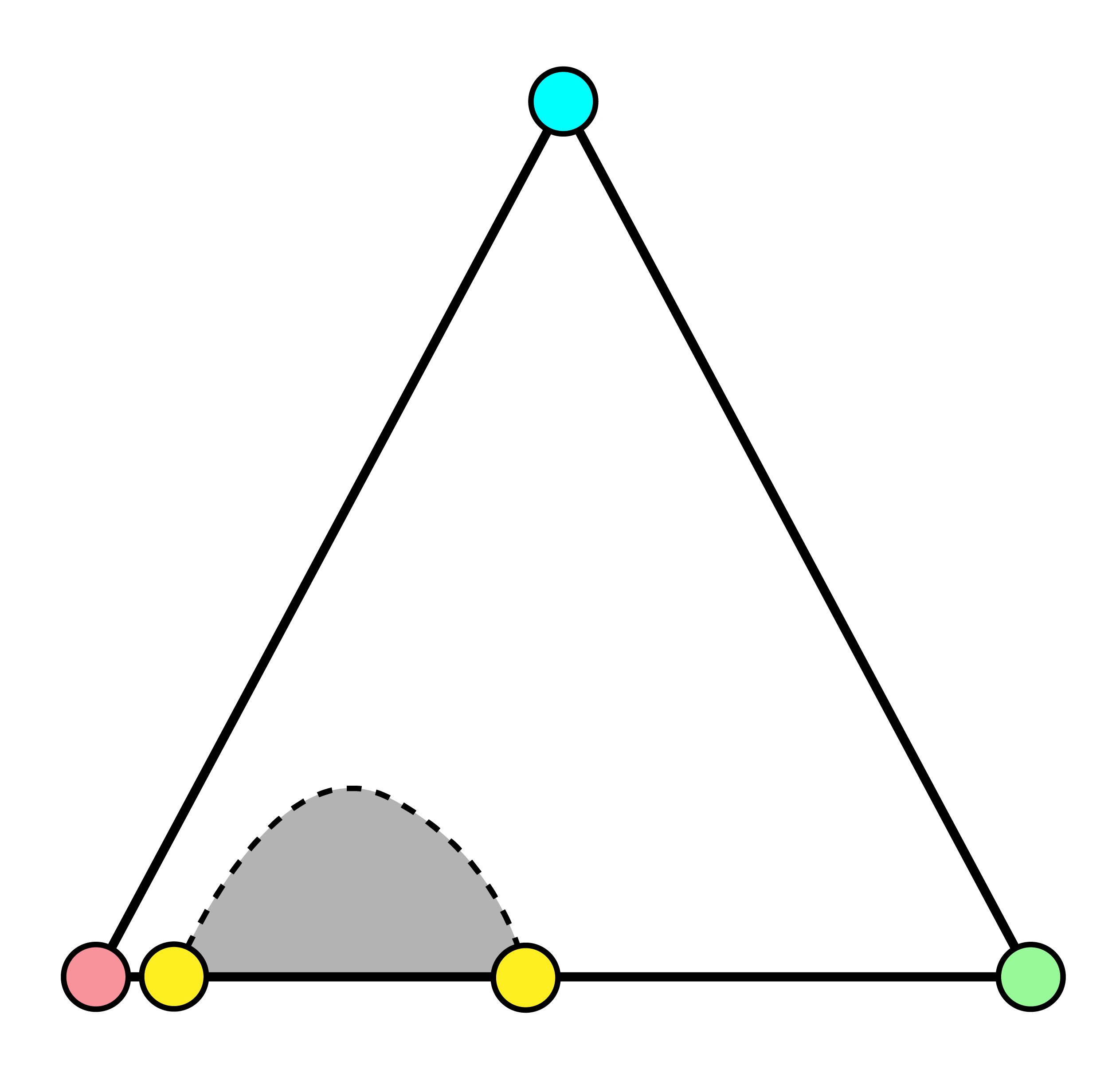}}
\caption{In the triangle bounded by $m_x=0$, $m_y=0$, and $m_z=0$ in the plane $m_x+m_y+m_z=1$, the corners of the triangle are as follows: $(0, 0, 1)$ (red),  $(1, 0, 0)$ (green), and $(0, 1, 0)$ (cyan). When $\mu$ decreases,  the complex domain (grey regions) reduce in sizes. The complex domain boundary initially intersects each of the $m_x=0$ and $m_y=0$ edges once (a). After the bifurcation point (b), the complex domain only intersects the $m_y=0$ edge (two points). The topology of the generalized mode $\mu$ surface depends on whether there is a solution when $m_x=0$.} \label{fig:complex_domain_bifurcation}
\end{figure}

The above situation changes when the parabola intersects the corner of $m_x=0$ and $m_y=0$ plus an additional point on $m_y=0$ (Figure~\ref{fig:complex_domain_bifurcation} (b)). It can be shown that this occurs when $\mu=\sqrt{1-\overline{\mu}^2}$, which is the bifurcation point. When this happens, the four segments where $c>0$ form a figure-eight (Figure~\ref{fig:complex_para}: the loop between the outermost and innermost loops). The real domain in the medium eigenvector manifold is thus the part of the sphere between this loop and its antipodal image. Gluing the two figure-eights results in a surface that has a non-manifold point corresponding to the center of the figure-eight. That is, the generalized mode $\mu$ surface in this case is a non-manifold surface.

Decreasing $\mu$ further, the parabolic segment only intersects the $m_y=0$ at two points (Figure~\ref{fig:complex_domain_bifurcation} (c)). This implies that the four corresponding segments ($c>0$) in the medium eigenvector manifold form two loops (Figure~\ref{fig:complex_para}: the innermost loop). Similarly, the four segments ($c<0$) also form two loops. The real domain is the part of the sphere between the four loops. Gluing the four loops pairwise according to the antipodal symmetry results in a sphere with two handles attached, i.e.\ double-torus. This is the topology of generalized mode $\mu$ surface when $\mu<\mu_0$.

When $\mu=0$, it is straightforward to verify that the parabolic segment degenerates to a single point, and the aforementioned four loops shrink to two pairs of antipodal points:
\begin{equation}
\begin{pmatrix} \pm \sqrt{\frac{\overline{\lambda}_1-\overline{\lambda}_2}{\overline{\lambda}_1-\overline{\lambda}_3}} & 0 & \pm \sqrt{\frac{\overline{\lambda}_2-\overline{\lambda}_3}{\overline{\lambda}_1-\overline{\lambda}_3}} \end{pmatrix}  \label{eq:singularities}
\end{equation}

Each of the pairs corresponds to one of the two singularities in the medium eigenvector manifold for neutral surfaces~\cite{Roy:18}.
\end{proof}

\begin{theorem}
Given a 3D linear tensor field $T(x, y, z)=T_0+xT_x+yT_y+zT_z$ and a plane $P$, the critical points of the mode function on the plane $P$ consist of at most four extrema.
\label{thm:at_most_four_extrema}
\end{theorem}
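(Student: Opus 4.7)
The plan is to combine the B\'ezout-based count of critical points developed in the paragraph preceding the theorem with a Morse-theoretic argument that exploits the antipodal symmetry $\mu(-T) = -\mu(T)$ of the mode function. First I would parameterize tensors in $P$ using the orthonormal basis $T_1, T_2, T_3$ of the 3D tensor subspace $V$ spanned by the field values on $P$. Because $\mu$ is scale-invariant in $T$, the normalization map $\Phi\colon P \to S^2 \subset V$ defined by $(x,y) \mapsto T(x,y)/\|T(x,y)\|$ is, for generic $P$, a diffeomorphism onto an open hemisphere $H$ of $S^2$, and $\mu|_P$ pulls back through $\Phi$ from the restriction to $S^2$ of the cubic polynomial $\tilde{\mu}(u,v,w) = 3\sqrt{6}\det(uT_1+vT_2+wT_3)$. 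Since this identification is a diffeomorphism, it preserves the type (extremum versus saddle) of every critical point.

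Next I would invoke the B\'ezout analysis already developed before the theorem: the system of two cubic equations in $(u',v')$ has at most nine complex solutions in $\mathbb{CP}^2$; two of them are spurious, four correspond to the degenerate points of the field (where $\mu = \pm 1$), and the remaining at most three are additional non-degenerate critical points. Hence the total number of critical points of $\mu$ on $P$ satisfies $T_P \le 4 + 3 = 7$.

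The core step is a Morse count on the whole sphere $S^2$. The antipodal symmetry $\tilde{\mu}(-u,-v,-w) = -\tilde{\mu}(u,v,w)$, which follows from $\det(-T) = -\det(T)$, implies that critical points of $\tilde{\mu}|_{S^2}$ pair up antipodally, with each maximum matched to a minimum (sign flip) and each saddle matched to a saddle. Writing the Morse counts on $S^2$ as $m_0 = m_2 = M$ and $m_1 = 2S$, the Euler-characteristic identity $m_0 - m_1 + m_2 = \chi(S^2) = 2$ yields $M = S + 1$. Because $\Phi$ identifies $P$ with one open hemisphere, each antipodal pair contributes exactly one representative to $P$, so $P$ contains $E_P = M = S+1$ extrema and $S_P = S$ saddles, giving $T_P = 2S+1$. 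Combining with $T_P \le 7$ forces $S \le 3$ and therefore $E_P \le 4$.

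The main technical obstacles I anticipate are twofold: first, verifying that for generic $P$ no critical point of $\tilde{\mu}$ lies on the great circle at infinity of $H$, so that the antipodal pairing cleanly splits across $H$ and its complement; and second, confirming that the four degenerate points of the tensor field are genuine (Morse-nondegenerate) critical points of $\mu$ so that the count applies uniformly. Both should be handled by standard genericity and perturbation arguments, with the bound for non-generic configurations following from the generic case by continuity.
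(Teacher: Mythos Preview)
Your proposal is correct and follows essentially the same route as the paper: normalize the planar tensor field to identify $P$ with a hemisphere of the unit sphere in the three-dimensional tensor span, use a B\'ezout count to bound the total number of critical points by seven, and then combine the Euler-characteristic identity on $S^2$ with the antipodal symmetry to cap the number of extrema at four. The only noteworthy difference is that you make the oddness $\tilde\mu(-p)=-\tilde\mu(p)$ (hence $m_0=m_2$ and the max--min antipodal pairing) explicit, whereas the paper invokes the antipodal halving of the extrema count without spelling out why extrema must pair with extrema; your version is therefore a bit cleaner on exactly the point that justifies the final division by two.
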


\begin{proof}

On the plane $P$, the tensor field $T(x, y, z)$ is still a linear tensor field. Therefore, without the loss of generality, we assume the plane $P$ to be the $XY$ plane. If this is not the case, we can simply perform a space transformation so that $P$ becomes the $XY$ plane under the new coordinate system.

In the $XY$ plane, the tensor field has the form $T(x, y)=T_0+xT_x+yT_y$. We define a map $\chi(x, y) = \frac{T(x, y)}{\| T(x, y) \|}$ from the plane $P$ to the set of unit tensors. Since the tensors on the plane is a combination of $T_0$, $T_x$, and $T_y$, the image of this map together with its negation forms a two-dimensional sphere in the linear subspace spanned by $T_0$, $T_x$, and $T_y$. The map $\chi$ is injective because if a tensor has already appeared in a plane, its multiples cannot. Furthermore, $\chi$ is locally surjective. Therefore, the set of critical points of the mode function in the plane $P$ has a one-to-one correspondence to the critical points of the mode function on the aforementioned two-dimensional sphere.

The mode of a unit tensor $T$ is $3 \sqrt{6} \det(T)$, which means that the critical points of the mode function on the sphere can be found by computing the critical points of the determinant function $\det(T)$ on the sphere. For convenience, we choose $T_1, T_2, T_3$ to be a orthonormal basis for the space spanned by $T_0$, $T_x$, and $T_y$.

The critical points of a function defined on the sphere are the points where its gradient is colinear to the sphere's normal. Thus, the critical points of the determinant function satisfy

\begin{eqnarray}
\nabla \det(u T_1 + v T_2 + w T_3) \times \begin{bmatrix} u \\ v \\ w \end{bmatrix} = 0  \label{eq:critical_determinant}\\
u^2 + v^2 + w^2 = 1.
\end{eqnarray}

Let \begin{equation}
\nabla \det(u T_1 + v T_2 + w T_3) = \begin{pmatrix} f(u, v, w) \\ g(u, v, w) \\ h(u, v, w) \end{pmatrix}
\end{equation}
\noindent where $f(u, v, w)$, $g(u, v, w)$, and $h(u, v, w)$ are quadratic polynomials. Thus, Equation~\ref{eq:critical_determinant} consists of the following three cubic equations:

\begin{eqnarray}
vh(u, v, w)=wg(u, v, w) \nonumber \\
wf(u, v, w)=uh(u, v, w) \nonumber \\
ug(u, v, w)=vf(u, v, w).  \label{eq:critical_determinant_three}
\end{eqnarray}

Together with the condition $u^2 + v^2 + w^2 = 1$, we have an over-determined system of three cubic equations and one quadratic equation. Fortunately, the three cubic equations (Equation~\ref{eq:critical_determinant_three}) are almost redundant. To see this, notice that multiplying the first two of these equations gives rises to

\begin{equation}
vwf(u,v,w)h(u, v, w)=uwg(u,v,w)h(u, v, w)
\end{equation}

When $w\ne0$ and $h(u,v,w)\ne0$, the above equation reduces to the third cubic equation $ug(u, v, w)=vf(u, v, w)$. This shows the redundancy of over-determined system.

After removing the redundant third equation, we get the system
\begin{eqnarray}
vh(u, v, w)=wg(u, v, w) \label{eq:critical_ax}\\
wf(u, v, w)=uh(u, v, w) \label{eq:critical_ay}\\
u^2 + v^2 + w^2 = 1, \label{eq:critical_sphere}
\end{eqnarray}
which has up to $18$ complex solutions based on B\'ezout's theorem~\cite{Fulton:74}. However, there are some spurious solutions. Note that for $vwf(u, v, w)h(u, v, w)=uwg(u, v, w)h(u, v, w)$ to be equivalent to $ug(u, v, w)=vf(u, v, w)$, we require that $w\ne0$ and $h(u,v,w)\ne0$. Assuming that $w=0$, then we must have $vh(u, v, w)=0=uh(u, v, w)$. Since $u$ and $v$ cannot be both zeros (or the vector $(u, v, w)$ is the zero vector), we must have $h(u, v, w)=0$.

Solutions satisfying $h(u, v, 0)=0$ and $u^2+v^2=1$ have four complex solutions, which are the spurious solutions to our system. Consequently, we have up to $18-4=14$ complex solutions, thus 14 real solutions at most. Note that if $(u, v, w)$ is a solution, so is $(-u, -v, -w)$. Moreover, they represent the same tensor. Consequently, there are only up to seven real solutions, i.e.\ up to seven critical points in the mode function in the plane.

The Euler Characteristic of a sphere is two~\cite{armstrong1979basic}, which means that the number of extrema is always two more than the number of saddles according to Morse theory~\cite{matsumoto2002introduction}. Together with the fact that there are at most $14$ critical points, we can conclude that there are at most eight extrema on the sphere. Since the antipodal points on the sphere give the same point on the plane, we have at most four extrema of the mode function on the plane $P$.

\end{proof}

\section{One More Application Scenario}
\label{sec:additional_example}

\begin{figure}[!t]
\centering%
\subfloat[][]{\includegraphics[height=0.5in]{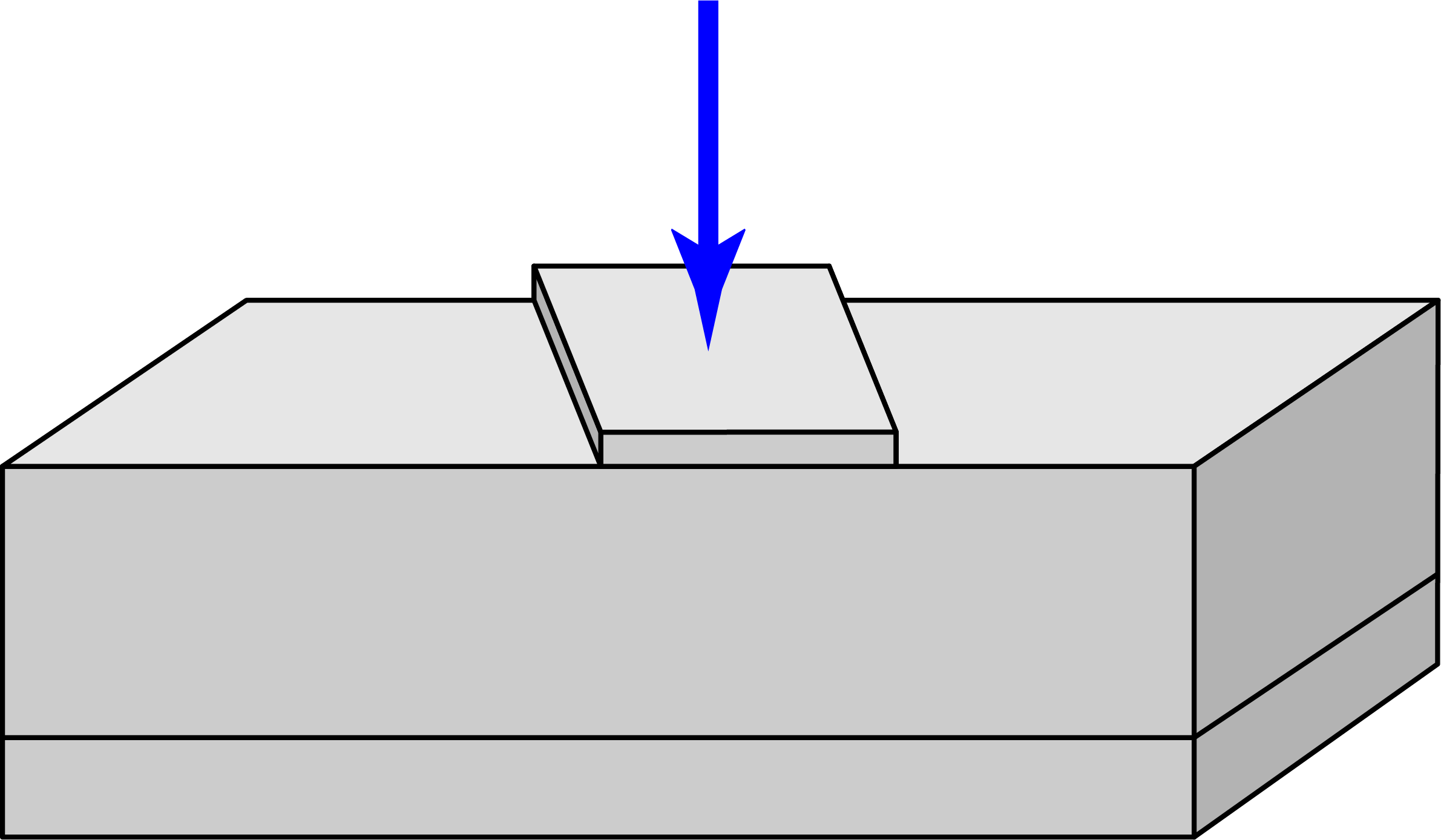}}
\caption[]{Another scenario of compression in a solid block.}
\label{fig:geometry2}
\end{figure}

We provide one more scenario of compression on a block (Figure~\ref{fig:geometry2}) in which the compressive load is misaligned with the natural geometric orientation of the block due to the addition of a slanted slab on the top of the block. This slanted slab presses down on the block.  At the bottom of the block, we also add a full-length layer with a more rigid material.  This set-up aims to provide a simplified representation of a tire tread block anchored on some steel belts that hold the tire together while driving on hard pavement. Nonetheless, this type of boundary condition is not limited to tire design. It can be due to misuse of the structure. In particular, both cases exist on tires, i.e.\ some designs angle tread blocks from the circumferential direction of the tire to facilitate rainwater drainage while irregular wear such as tire cupping induces uneven tread block wear on the shoulders of the tire.

The neutral surface of the stress is shown in Figure~\ref{fig:24_frontAndBackView} (a), which, interestingly, reflects the existence of the slab. In addition, there are two sheets of neutral surfaces attached to the top face, two sheets attached to the sidewalls, and two tubes that connect the front and back faces. The location of these surfaces which indicate pure shear regions match our expectation due to our design of the boundary conditions. In addition, other mode surfaces can also provide meaningful insight. For example, we show the generalized mode $0.4$ surface in Figure~\ref{fig:24_frontAndBackView} (b). While the positive part of this surface (teal) and the negative part of the surface (gold) both approximate the neutral surface (Figure~\ref{fig:24_frontAndBackView} (a)), the similarity stops there. The positive part of the mode surface contains additional sheets that are not present in the negative part of the mode surface. This indicates that despite the compression-dominant boundary condition, extension can exist in the middle of the volume. Seeing this positive mode surface confirms the bulging effect from compression.
\begin{figure}[!tb]
  \centering
\subfloat[][Neutral Surface]{\includegraphics[width=2.9in]
{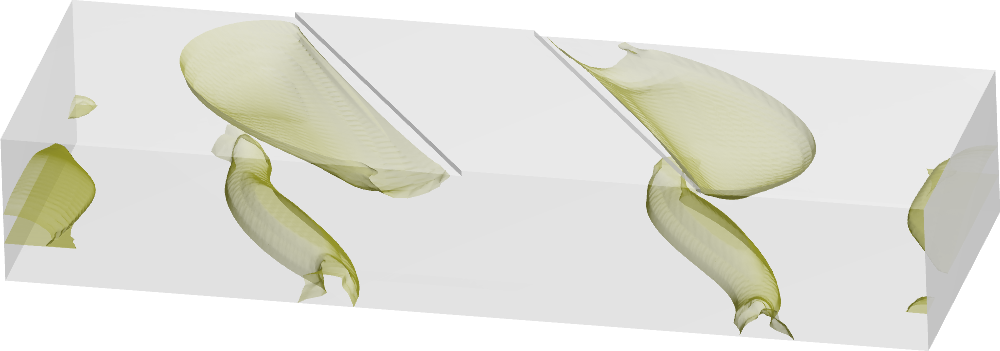}}\\
\subfloat[][$\mu=\pm 0.4$]{\includegraphics[width=2.9in]
{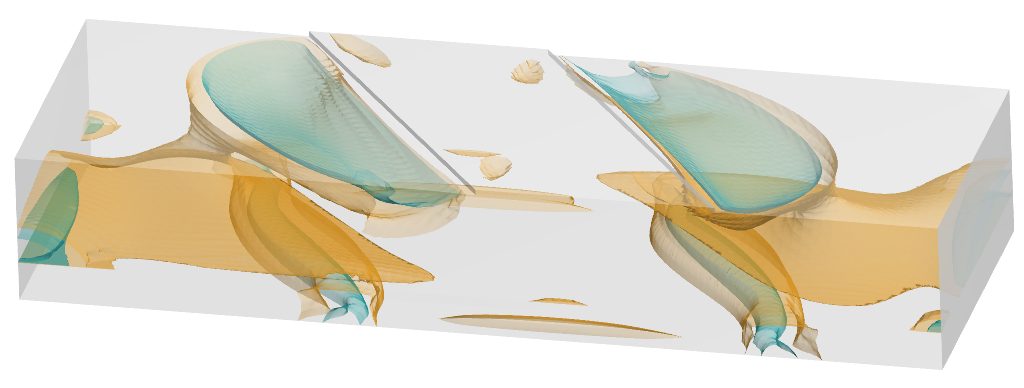}}
\caption{Mode surfaces of stress tensor fields of a rectangular block being pushed down by a slanted slab. }\label{fig:24_frontAndBackView}
\end{figure}

This visualization can potentially augment the definition of uniaxial compression, the most fundamental description of compression numerically, in which the compressive forces are not aligned with the underlying geometry.

\end{document}